\documentclass[a4paper,11pt,nopdfoutputerror]{quantumarticle}
\usepackage{amsmath,amssymb,amsthm}
\usepackage{booktabs}
\usepackage{array}
\usepackage{graphicx}
\usepackage[colorlinks=true,linkcolor=blue!60!black,citecolor=blue!60!black,urlcolor=blue!60!black]{hyperref}

\newtheorem{theorem}{Theorem}
\newtheorem{proposition}[theorem]{Proposition}
\newtheorem{lemma}[theorem]{Lemma}
\newtheorem{fact}[theorem]{Fact}
\newtheorem{corollary}[theorem]{Corollary}
\theoremstyle{remark}
\newtheorem*{remark}{Remark}

\newtheorem{alemma}{Lemma}

\newtheorem{elemma}{Lemma}

\newtheorem*{etheorem}{Theorem E}
\newtheorem*{acorollary}{Corollary (the transform bound)}

\newcommand{\Dp}{\mathcal{D}_p}
\newcommand{\Tr}{\mathrm{Tr}}
\newcommand{\id}{\mathrm{id}}
\newcommand{\ent}{\mathsf{H}}

\title{A certified lower bound on the quantum-capacity threshold of the depolarizing channel}
\author{Artus Krohn-Grimberghe}
\affiliation{Percivio Ltd.}
\date{August 14, 2026}

\begin{document}

\maketitle

\begin{abstract}
The noise threshold below which the qubit depolarizing channel retains
positive quantum capacity has been studied since 1996. The classic
constructions come with exact finite formulae, but in every reported
threshold to date --- most recently the record of Agarwal et
al.~\cite{agarwal2026} --- the final step, the sign of the coherent
information at the reported point, is a floating-point evaluation. We
give the first exact-arithmetic, independently machine-checkable
positivity proofs in this regime: an explicit 45-copy rank-two input
state $A$, published as a 1{,}472-byte witness, with certified $I_c > 0$
at the exact rational $p = 16239/250000 = 0.064956$ (per-Pauli
convention; total error rate $3p = 0.194868$) --- beyond both the best
value printed in Ref.~\cite{agarwal2026} ($0.064657$) and the strongest
state in the same authors' public repository ($\approx 0.064911$), both
numerical. Because the depolarizing family is a
semigroup under composition, fixed-input coherent information is
nonincreasing in $p$ on $[0,1/4]$, so a single certified point extends to
the entire interval below it. The same machinery confines the positivity
boundaries of $A$ and of the strongest public state of
Ref.~\cite{agarwal2026} to disjoint rational intervals separated by more
than $1/25000$ --- to our knowledge the first proven ordering of the
positivity boundaries of two explicit competing code states. Every computational claim in the paper reduces to a
finite list of big-integer comparisons, checkable by a dependency-free
few-hundred-line verifier whose soundness rests on three self-contained
half-page lemmas. Payloads, certificates, and verifier accompany the paper
as a supplementary artifact.
\end{abstract}

\section{Introduction}
\label{sec:intro}

How much depolarizing noise can a qubit channel suffer before its quantum
capacity vanishes? Thirty years of work has pushed the known-positive
region outward:

\begin{table*}[t]
\centering
\resizebox{\textwidth}{!}{%
\begin{tabular}{llll}
\toprule
Year & Bound (per-Pauli $p$) & Source & Evidence type \\
\midrule
1996 & 0.063096 & hashing (random codes)~\cite{bennett1996} & analytic formula, numerical root \\
1996 & 0.063452 & Shor--Smolin concatenation~\cite{shorsmolin1996} (re-evaluated in~\cite{fernwhaley2008}) & closed form, numerical crossing \\
1998 & 0.06352 & DiVincenzo--Shor--Smolin~\cite{dss1998} & closed form, numerical crossing \\
2007 & 0.063626 & Smith--Smolin~\cite{smithsmolin2007} & closed form, numerical crossing \\
2008 & 0.0637675 & Fern--Whaley~\cite{fernwhaley2008} & Monte Carlo ($\pm 6\times 10^{-8}$) \\
2026 & 0.064657 & Agarwal et al., symmetry-enhanced search~\cite{agarwal2026} & floating-point eigensolver \\
this work & \textbf{0.064956} & state $A$, certified & \textbf{exact-rational certificate} \\
\bottomrule
\end{tabular}%
}
\caption{Reported positivity points for the quantum capacity of the qubit
depolarizing channel, per-Pauli convention (multiply by three for the
total-error-rate convention). The last column describes the status of the
printed boundary value; the constructions and formulae behind the classic
entries are themselves rigorous. The 1996 Shor--Smolin paper prints
fidelity $f = 0.8096$, i.e.\ $p \approx 0.06347$; the tabulated value is
Fern and Whaley's later exact enumeration of the same code.}
\label{tab:history}
\end{table*}

We use the per-Pauli convention throughout: $\Dp(\rho) = (1-3p)\rho +
p(X\rho X + Y\rho Y + Z\rho Z)$, so the total probability of a nonidentity
Pauli is $3p$; results quoted from the literature are converted once,
here, and never again. Much of the literature writes the same family as
$D_q(\rho) = (1-q)\rho + \tfrac{q}{3}(X\rho X + Y\rho Y + Z\rho Z)$ with
$q = 3p$ the total error probability --- a convention with its own
virtue, since $1-q$ is then the singlet fraction: the fidelity of the
state obtained by sending half of a maximally entangled pair. In that
convention our certified point $p = 16239/250000$ reads
$q = 48717/250000 = 0.194868$.

The last column is the point of this paper, and it is a statement about
printed boundary values, not about the classic constructions, which are
rigorous mathematics: DiVincenzo, Shor, and Smolin derived closed-form
expressions for their concatenated codes~\cite{dss1998}, and positivity
above the hashing zero has been believed since 1996 on the strength of
such formulae. What no prior entry supplies is a verifiable sign: in
each case the final step is a numerical evaluation --- the decimal root
of an analytic formula, a floating-point crossing of a closed form, or a
Monte Carlo extrapolation. The incumbent itself reports its
coherent-information computation as ``accurate up to at least 10 decimal
places''~\cite{agarwal2026} --- an accuracy claim, not a certificate. Coherent-information margins in this regime
are of order $10^{-7}$ bits --- far below anything a double-precision
eigensolver can settle on its own, and the superadditivity phenomena
that make the threshold interesting live precisely in such margins. Ours
is the first entry whose sign carries a machine-checkable proof, and the
paper is organized so that the reader can verify it without trusting the
authors, their code, or their arithmetic.

The eighteen years between 2008 and 2026 were not quiet:
neural-network ans\"atze~\cite{bauschleditzky2020}, systematic sweeps of
the Pauli simplex~\cite{bauschleditzky2021}, genetic and particle-swarm
searches~\cite{sidhardh2022}, low-noise expansions~\cite{lls2018},
perturbative criteria~\cite{wu2025}, and further stabilizer studies by
the incumbent group~\cite{agarwal2026mar} improved other Pauli families
without moving the depolarizing record, which fell only to the
symmetry-enhanced search of Ref.~\cite{agarwal2026}.

\paragraph{How to verify this paper.} Three steps, in increasing depth:
\begin{enumerate}
\item \emph{Five minutes, using the incumbents' code, not ours.} Run the
two payloads of the artifact through the public floating-point evaluator
of Ref.~\cite{agarwal2026} and observe the four endpoint signs of
Fact~\ref{fact:signs}.
\item \emph{One command.} Run the supplied pure-integer verifier on the
four certificates: standard library only, \texttt{int} and
\texttt{Fraction}, no floats, no eigensolver, \texttt{python3 -O}-safe,
bit-reproducible.
\item \emph{One afternoon.} Check
Propositions~\ref{prop:affine}--\ref{prop:composition} and
Theorem~\ref{thm:confinement} by hand (elementary algebra and the
intermediate value theorem), and, if desired, the three soundness lemmas
of Appendix~\ref{app:lemmas} --- half a page each.
\end{enumerate}
Nothing else in the paper is load-bearing.

\paragraph{Contributions.} (i) A certification framework that reduces
coherent-information sign claims at 45 copies to big-integer comparisons
(Tier 3; Section~\ref{sec:certified}, Appendices~\ref{app:lemmas}--\ref{app:verifier}).
(ii) Monotonicity of fixed-input coherent information on $[0,1/4]$ by
semigroup composition plus data processing (Tier 2;
Section~\ref{sec:monotonicity}). (iii) Confinement of the positivity
boundaries of two explicit states to disjoint exact rational intervals,
with a proven separation (Tier 1 given (i)--(ii);
Section~\ref{sec:results}). (iv) The record certified positivity point
$p = 16239/250000$ (Corollary~\ref{cor:threshold}).

Section~\ref{sec:notclaimed} states explicitly what we do not claim.

\section{Setting and conventions (Tier 1)}
\label{sec:setting}

$\Dp(\rho) = (1-3p)\rho + p(X\rho X + Y\rho Y + Z\rho Z)$ is CPTP for
$p \in [0,1/3]$. Entropies are in bits. For a channel $\mathcal{N}$ and
input $\rho$ with purification $\psi_\rho$, the coherent information is
$I_c(\rho,\mathcal{N}) = S(\mathcal{N}(\rho)) -
S((\id\otimes\mathcal{N})(\psi_\rho))$, and $Q(\mathcal{N}) \ge
I_c(\rho,\mathcal{N}^{\otimes n})/n$ for every $n$ and $\rho$ (LSD
theorem~\cite{lloyd1997,shor2002,devetak2005}).

Two 45-qubit states appear:
\begin{itemize}
\item $R$ --- the strongest public $n=45$ state of the
Ref.~\cite{agarwal2026} repository (payload SHA-256
\texttt{\small 9a94606d28cb787ebffdc0c4641562}\allowbreak\texttt{\small 3ae9e83d27df5de655f4114e86ca4cc1c6}),
credited to its authors. Our payload is byte-identical to the tensor
storage of that repository's published \texttt{n\_45\_ens.pt} file (as
retrieved 2026-08-14; the printed hash pins the exact bytes certified), so
the certified $R$ enclosures apply to their state exactly as published,
with no re-encoding.
\item $A$ --- this work (payload SHA-256
\texttt{\small 1df2e00c06b2614cd04a5bfc7bede0}\allowbreak\texttt{\small 40da2d3fc1dd49e7138e201c101aab266f}).
\end{itemize}

Both payloads are 1{,}472 bytes: little-endian complex128 coefficients in
a $2\times 46$ layout, the rows of a rank-two mixture of
permutation-symmetric (Dicke-basis) pure states, interpreted exactly as
dyadic rationals. That interpretation --- the bytes, not any
floating-point neighborhood of them --- is the object all certificates
refer to. The state is a witness; how it was found is irrelevant to every
claim that follows (see the contribution statement).

We write $f_X(p) = I_c(\rho_X, \Dp^{\otimes 45})$ in bits for
$X \in \{R, A\}$.

\section{Elementary structure of the depolarizing family (Tier 1)}
\label{sec:structure}

\begin{proposition}[affine form]
\label{prop:affine}
For every $2\times 2$ matrix $T$:\; $\Dp(T) = (1-4p)\,T + 2p\,\Tr(T)\,I$.
\end{proposition}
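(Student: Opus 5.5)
The plan is to use the Pauli twirl identity
\[
T + XTX + YTY + ZTZ = 2\,\Tr(T)\,I,
\]
valid for every $2\times 2$ matrix $T$. Once we have it, the proposition is immediate. We rewrite
\[
\Dp(T) = (1-3p)T + p\,(XTX+YTY+ZTZ) = (1-3p)T + p\,\bigl(2\Tr(T)I - T\bigr) = (1-4p)T + 2p\,\Tr(T)\,I .
\]
So the work reduces to proving the twirl identity.

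Both sides of the identity are linear in $T$. It therefore suffices to check it on a basis of the $2\times 2$ matrices, and we take the Pauli basis $\{I,X,Y,Z\}$.

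\begin{itemize}
\item For $T=I$, the left side is $4I$. Since $\Tr(I)=2$, the right side is also $4I$.
\item For $T=\sigma\in\{X,Y,Z\}$, we use that distinct Paulis anticommute, so $P\sigma P = -\sigma$ for $P\neq\sigma$, while $\sigma\sigma\sigma=\sigma$. Of the four conjugates of $\sigma$ (by $I,X,Y,Z$), two equal $+\sigma$ and two equal $-\sigma$. The sum is $0$, which matches $2\Tr(\sigma)I=0$.
\end{itemize}

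Alternatively, one could check the identity directly on the matrix units $E_{ij}$. That route is equally mechanical.

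There is no real obstacle here. The only point needing care is the sign bookkeeping in the anticommutation step, for example $YXY=-X$ and $ZXZ=-X$, so that the three non-identity conjugates of $\sigma$ contribute $+\sigma,-\sigma,-\sigma$. We also note explicitly that the argument never uses positivity or Hermiticity of $T$. This is why the identity holds for arbitrary $T$, as the statement requires for its later use on non-state operators.
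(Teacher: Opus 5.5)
Your proof is correct and is essentially the paper's argument: linearity in $T$, a check on the Pauli basis $\{I,X,Y,Z\}$, and the fact that each nonidentity Pauli anticommutes with exactly two of $X,Y,Z$. The only difference is that you first isolate the twirl identity $T+XTX+YTY+ZTZ=2\,\Tr(T)\,I$ and then substitute it, whereas the paper checks $\Dp(I)=I$ and $\Dp(\sigma_i)=(1-4p)\sigma_i$ directly.
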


\begin{proof}
Both sides are linear in $T$; check the four basis cases
$T \in \{I,X,Y,Z\}$: Pauli conjugation fixes $I$ and gives
$\sigma_i \mapsto (1-4p)\sigma_i$ since each nonidentity Pauli
anticommutes with exactly two of $X,Y,Z$.
\end{proof}

\begin{proposition}[composition]
\label{prop:composition}
$\mathcal{D}_q \circ \Dp = \mathcal{D}_{p+q-4pq}$. Consequently, for
$0 \le p_1 \le p_2 \le 1/4$, $\mathcal{D}_{p_2} = \mathcal{D}_q \circ
\mathcal{D}_{p_1}$ with $q = (p_2-p_1)/(1-4p_1) \in [0,1/4]$.
\end{proposition}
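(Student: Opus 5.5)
The plan is to use Proposition~\ref{prop:affine} to reduce both sides to the same affine form. Write $\lambda(p) = 1-4p$, so Proposition~\ref{prop:affine} says $\Dp(T) = \lambda(p)\,T + \tfrac{1-\lambda(p)}{2}\Tr(T)\,I$. The composition identity then becomes the statement that the shrinking factors multiply.

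\emph{First step.} Since $\Dp$ is trace preserving, we have $\Tr(\Dp(T)) = \Tr(T)$. This also follows directly from the affine form: $(1-4p)\Tr T + 2p\Tr T\cdot 2 = \Tr T$. Applying $\mathcal{D}_q$ to $\Dp(T)$ gives
\[
\mathcal{D}_q(\Dp(T)) = (1-4q)\bigl[(1-4p)T + 2p\Tr(T) I\bigr] + 2q\Tr(T)\,I .
\]
This equals $(1-4q)(1-4p)\,T + \bigl[2p(1-4q)+2q\bigr]\Tr(T)\,I$. Now set $r = p+q-4pq$. Then $1-4r = 1-4p-4q+16pq = (1-4p)(1-4q)$, and $2r = 2p+2q-8pq = 2p(1-4q)+2q$. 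So both coefficients match those of $\mathcal{D}_r$. Both maps are linear on all $2\times 2$ matrices, so they agree. This proves the first claim for all real $p,q$; the CPTP caveat only matters for interpretation.

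\emph{Second step.} For the consequence, we solve $p_1 + q - 4p_1 q = p_2$ for $q$. If $p_1 < 1/4$, then $1-4p_1 > 0$, and $q = (p_2-p_1)/(1-4p_1)$ is the unique solution. We then check the range. The bound $q \ge 0$ follows from $p_2 \ge p_1$. The bound $q \le 1/4$ is equivalent to $4(p_2-p_1) \le 1-4p_1$, i.e.\ $p_2 \le 1/4$.

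\emph{Main obstacle.} The only real snag is the endpoint $p_1 = 1/4$, where the formula divides by zero. Then $p_2 = 1/4$ as well, and $\mathcal{D}_{1/4}$ is the completely depolarizing map $T \mapsto \tfrac12\Tr(T) I$. Every $\mathcal{D}_q$ fixes $\tfrac12\Tr(T)I$, so any $q\in[0,1/4]$ works, for instance $q=0$. The proof should either state that convention explicitly or restrict the displayed formula to $p_1<1/4$. Finally, since $q\in[0,1/4]\subset[0,1/3]$, $\mathcal{D}_q$ is a genuine CPTP map. This is what the later data-processing monotonicity argument needs.
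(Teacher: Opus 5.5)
Your proof is correct and takes the same route as the paper. The paper reads off from Proposition~\ref{prop:affine} that each $\Dp$ fixes $I$ and scales the traceless part by $1-4p$, so the factors multiply to $1-4(p+q-4pq)$; it then checks $q\le 1/4$ via $p_2-p_1\le \tfrac14(1-4p_1)$, which is your equivalence with $p_2\le 1/4$. Your separate treatment of $p_1=p_2=1/4$ covers a point the paper's proof passes over: there the displayed $q$ is $0/0$, and any $q$ (e.g.\ $q=0$) works because $\mathcal{D}_q$ fixes $\tfrac12\Tr(T)I$. This gap is harmless for the monotonicity lemma, where $p_1=p_2$ needs no composition at all.
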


\begin{proof}
By Proposition~\ref{prop:affine} the family acts on the traceless part by
the scalar $1-4p$, and $(1-4p)(1-4q) = 1-4(p+q-4pq)$. Solving
$p_2 = p_1 + q - 4p_1 q$ for $q$ gives the stated value, which lies in
$[0,1/4]$ because $0 \le p_2 - p_1 \le 1/4 - p_1 \le 1 - 4p_1$ (times
$1/4$).
\end{proof}

\begin{remark}[scope]
At $p = 1/4$ the family passes through the completely depolarizing
channel; beyond it $1-4p < 0$ and the ordering argument genuinely fails.
Witness: for the maximally mixed single-qubit input,
$$I_c = 1 - H(1-3p,\,p,\,p,\,p)$$
rises from $-1$ at $p=1/4$ to $1-\log_2 3$ at $p=1/3$. All monotonicity claims in this paper are restricted to
$[0,1/4]$.
\end{remark}

\section{Monotonicity (Tier 2 --- the imported theorem)}
\label{sec:monotonicity}

\begin{lemma}
\label{lem:monotone}
For every fixed 45-qubit input state $\rho$, the map $p \mapsto
I_c(\rho, \Dp^{\otimes 45})$ is nonincreasing on $[0,1/4]$.
\end{lemma}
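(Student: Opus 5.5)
The plan is to combine the semigroup structure of Proposition~\ref{prop:composition} with the data-processing inequality for coherent information. Fix $0 \le p_1 \le p_2 \le 1/4$. Proposition~\ref{prop:composition} gives $\mathcal{D}_{p_2} = \mathcal{D}_q \circ \mathcal{D}_{p_1}$ with $q = (p_2-p_1)/(1-4p_1) \in [0,1/4]$. Here $\mathcal{D}_q$ is CPTP because $q \le 1/4 \le 1/3$. Tensoring this identity forty-five times yields
\[
\mathcal{D}_{p_2}^{\otimes 45} = \mathcal{D}_q^{\otimes 45} \circ \mathcal{D}_{p_1}^{\otimes 45},
\]
a composition of CPTP maps on $45$ qubits. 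So the channel at the larger noise level is a post-processing of the channel at the smaller one.

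Next I will invoke the bottleneck (data-processing) inequality for coherent information. For a channel $\mathcal{N}$, a CPTP post-processing $\mathcal{M}$ and any input $\rho$, it states $I_c(\rho, \mathcal{M}\circ\mathcal{N}) \le I_c(\rho,\mathcal{N})$ (Schumacher--Nielsen). To keep the argument self-contained I will derive it from the data-processing inequality for the conditional entropy of a purification. Let $\psi_{RA}$ purify $\rho$, and set $\omega_{RB} = (\id\otimes\mathcal{N})(\psi)$. Then
\[
I_c(\rho,\mathcal{N}) = S(B)_\omega - S(RB)_\omega = -S(R|B)_\omega .
\]
Applying $\mathcal{M}$ on $B$ cannot decrease $S(R|B)$, since conditional entropy is monotone under local processing of the conditioning system. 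This monotonicity is equivalent to strong subadditivity, via a Stinespring dilation of $\mathcal{M}$ and the fact that discarding part of the conditioning system cannot lower the conditional entropy. Hence $I_c(\rho,\mathcal{M}\circ\mathcal{N}) = -S(R|B')_{(\id\otimes\mathcal{M})\omega} \le -S(R|B)_\omega$. Taking $\mathcal{N} = \mathcal{D}_{p_1}^{\otimes 45}$ and $\mathcal{M} = \mathcal{D}_q^{\otimes 45}$ gives $I_c(\rho,\mathcal{D}_{p_2}^{\otimes 45}) \le I_c(\rho,\mathcal{D}_{p_1}^{\otimes 45})$, which is the claim.

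There is no real obstacle here. The only point requiring care is checking that the hypotheses of the imported data-processing theorem hold. First, the intermediate channel must be a genuine CPTP map, which is exactly why $q$ has to lie in $[0,1/4]$. Second, the same purification must be used on both sides, which holds because $I_c$ depends only on $\rho$ and the channel, not on the choice of purification. The restriction to $[0,1/4]$ enters only through Proposition~\ref{prop:composition}, consistent with the Remark. The inequality itself is imported from the literature as strong subadditivity.
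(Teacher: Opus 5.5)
Your proposal is correct and matches the paper's proof: Proposition~\ref{prop:composition} exhibits $\mathcal{D}_{p_2}^{\otimes 45}$ as the CPTP post-processing $\mathcal{D}_q^{\otimes 45}\circ\mathcal{D}_{p_1}^{\otimes 45}$, and data processing for coherent information finishes the argument. Your extra step, reducing data processing to strong subadditivity via $I_c = -S(R|B)$, correctly unpacks the one result the paper simply cites to Schumacher--Nielsen.
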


\begin{proof}
For $p_1 \le p_2$ in $[0,1/4]$, Proposition~\ref{prop:composition} gives
$\mathcal{D}_{p_2}^{\otimes 45} = \mathcal{D}_q^{\otimes 45} \circ
\mathcal{D}_{p_1}^{\otimes 45}$ with $\mathcal{D}_q^{\otimes 45}$ CPTP.
Coherent information does not increase under CPTP post-processing of the
channel output~\cite{schumachernielsen1996}.
\end{proof}

Both ingredients are classical; we claim the use, not the ingredients.
Data processing is the only theorem this paper imports
(Appendix~\ref{app:derivation} removes what would otherwise be a second
import).

\section{Four certified signs (Tier 3 --- the computation)}
\label{sec:certified}

\paragraph{Block decomposition.} For a rank-two permutation-symmetric
input, the $2^{45}$-dimensional computation collapses. Stated for general
$n$ (proved self-contained in Appendix~\ref{app:derivation}): for
$|\Psi\rangle = N^{-1/2}\sum_r |r\rangle_{\mathrm{ref}} \otimes \sum_k
\psi[r][k]\,|D_k^n\rangle$, the joint output $(\id_{\mathrm{ref}} \otimes
\Dp^{\otimes n})(|\Psi\rangle\langle\Psi|)$ is unitarily equivalent to
$\bigoplus_f B_f \otimes I_{m_f}$, where $f$ ranges over
$\lceil n/2\rceil,\dots,n$, each $B_f$ is an explicit
$2(2f-n+1)$-dimensional Hermitian matrix with entries rational in $p$ (up
to one square-root normalization), and $m_f =
\binom{n+1}{f+1}(2f-n+1)/(n+1)$. Hence $I_c = \sum_f
m_f\,[H(\Tr_{\mathrm{ref}} B_f) - H(B_f)]$, and at $n=45$ the largest
block is 92-dimensional. This decomposition was developed by Bhalerao
and Leditzky~\cite{bhalerao2025} for computing the coherent information
of permutation-invariant codes; Ref.~\cite{agarwal2026} extended it to
arbitrary states of the symmetric subspace, and its public evaluator
implements it --- that is exactly what makes verification step (i)
meaningful. Permutation-invariant codes themselves are much older, going
back to Pollatsek and Ruskai~\cite{pollatsekruskai2004} and
Ouyang~\cite{ouyang2014}; a general error-correction theory for them and
a Schur--Weyl framework for permutation-invariant optimization appeared
in parallel with this line of
work~\cite{ouyangbrennen2026,berghparentin2026}. Our
proof in Appendix~\ref{app:derivation} is independent and self-contained,
so the only theorem the paper imports remains data processing.

\begin{fact}[certified]
\label{fact:signs}
With the exact dyadic payload semantics of Section~\ref{sec:setting}:
\begin{align*}
f_R(4057/62500) > 0, &\quad f_R(16229/250000) < 0,\\
f_A(16239/250000) > 0, &\quad f_A(203/3125) < 0.
\end{align*}
\end{fact}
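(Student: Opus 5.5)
The plan is to reduce each of the four sign claims to a finite set of exact rational inequalities, and to discharge them with rigorous interval enclosures of every eigenvalue entropy. I start from the block decomposition: $f_X(p) = \sum_f m_f\,[H(\Tr_{\mathrm{ref}} B_f) - H(B_f)]$, with $f$ running over $23,\dots,45$. Each $B_f$ has size at most $92\times 92$. Its entries are rational in $p$ and in the dyadic payload coefficients, apart from the single normalization $N^{-1/2}$.

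That normalization is removed by working with $N B_f$. Since $N$ is an exact rational, the entropies can be corrected afterwards through $H(B) = -\Tr(B\log B)$. Concretely, $H(B_f) = \log N\cdot\Tr B_f - N^{-1}\Tr\!\big((NB_f)\log(NB_f)\big)$, and the $\log N$ terms cancel in the sum, or are bounded exactly. At each of the four rational points $p$ I therefore have exact Gaussian-rational Hermitian matrices. The reduced matrices $\Tr_{\mathrm{ref}} B_f$ are only $(2f-n+1)$-dimensional and are handled in the same way.

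For the entropy of each block I proceed in three steps.
\begin{enumerate}
\item I compute approximate eigenvectors in floating point. These are only a guide.
\item I round them to rationals, $V$, and form $M = V^{*}(NB_f)V$ exactly.
\item I certify eigenvalue enclosures with exact arithmetic. Gershgorin discs on $M$ work, combined with a rational bound on $\|V^*V - I\|$ that transfers the enclosure back to $NB_f$ via Weyl-type perturbation. Alternatively, an $LDL^{T}$ factorization of $NB_f - tI$, which gives exact inertia by Sylvester's law, brackets each eigenvalue between rationals $\ell_i \le \lambda_i \le u_i$.
\end{enumerate}

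Next, I bound $-\lambda\log_2\lambda$ on each interval $[\ell_i,u_i]$ by rational lower and upper bounds. The function is concave, so it is monotone on each side of $1/e$ and a bound follows from evaluating at the endpoints. The logarithms themselves are enclosed by truncated series with explicit rational remainders, for example $\log_2 x = k + \log_2(1+y)$ with an alternating or geometric tail bound. This yields rational intervals for each $H$ term. Multiplying by the integers $m_f$ and summing gives an interval for $f_X(p)$, and each claimed sign is the statement that the whole interval lies on the appropriate side of $0$. Each check is a single big-integer comparison.

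The main obstacle is precision. The margins are of order $10^{-7}$ bits, and the $m_f$ are huge binomial multiplicities, up to roughly $10^{12}$. The error budget per eigenvalue therefore has to be around $10^{-20}$ or smaller. Near-zero eigenvalues are the most delicate case, since $-\lambda\log\lambda$ has unbounded slope there. For a cluster of $k$ tiny eigenvalues with total mass $s\le\varepsilon$, I will bound their contribution by $s\log_2(k/s)$, which is the maximal entropy at fixed mass. The remaining risk is that Gershgorin-type enclosures are too loose for clustered eigenvalues. If so, I will fall back on exact inertia counts by $LDL^{T}$ at rational shifts, which give arbitrarily tight brackets at the cost of larger integers. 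At $92\times 92$ this remains feasible, and each shift is itself a finite list of integer sign checks.
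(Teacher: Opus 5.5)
Your architecture is essentially the paper's: a floating-point eigenbasis used only as a hint, exact conjugation, Gershgorin discs inflated by a perturbation bound derived from the Gram defect of $V$, monotonicity of $-x\log_2 x$ on either side of $1/e$, directed series enclosures of the logarithm, and a final integer comparison. The paper derives the perturbation bound through the polar factor (Lemma~\ref{lem:polar} and the transform bound).

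One step fails as written, and both your main route (forming $M = V^{*}(NB_f)V$ exactly) and your fallback (exact $LDL^{T}$ inertia of $NB_f - tI$) depend on it. That step is the claim that $NB_f$ is an exact Gaussian-rational matrix. The input enters as $|\Psi\rangle\langle\Psi|$, so $N$ appears only as $1/N$, which is already rational. The genuinely irrational factor is the Dicke/Schur normalization $\sqrt{(d-j)!\,j!/\bigl((d-i)!\,i!\binom{n}{k}\binom{n}{l}\bigr)}$ in Theorem~E.

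Because of the selection rule $l = k-(i-j)$, an off-diagonal entry is a sum over $k$ of rational multiples of different roots. For example, in the sector $f = 45$ with $i-j=1$, the entry is $\sqrt{(46-i)/i}$ times a Gaussian-rational combination of $\sqrt{(46-k)/k}$, $k=1,\dots,45$, i.e.\ of $\sqrt{45},\sqrt{22},\sqrt{43/3},\dots$. For a generic payload this combination does not collapse to a rational. No rescaling of the $(r,i)$ basis removes these $k$-dependent roots. So neither $M$ nor an exact $LDL^{T}$ factorization can be computed over the Gaussian rationals. Exact arithmetic in the multiquadratic field the roots generate is impractical, and deciding signs there would itself require enclosures.

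The missing ingredient is the one the paper uses:
\begin{enumerate}
\item Enclose each root in a directed rational interval.
\item Choose a rational (in the paper, binary64-dyadic) Hermitian center $B$ with a certified bound $\|C-B\|_F \le M$ against the exact block $C$.
\item Conjugate $B$ rather than $C$.
\item Add $M$ to the perturbation budget, $\varepsilon = M + \rho\,\|B\|_F(1+\|V\|_2)$ with $\rho$ the polar-factor bound of Lemma~\ref{lem:polar}, before inflating the Gershgorin discs by $D\varepsilon$.
\end{enumerate}
Your inertia fallback needs the same $M$ term via Weyl's inequality, since it can only be run on $B - tI$. Rounding to a binary64 center also keeps the operands small; exact conjugation of blocks whose entries are degree-$45$ polynomials in $p$ would produce very large integers.

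Two smaller points. To turn discs into per-eigenvalue brackets you need the component-counting half of Gershgorin's theorem: a disjoint component formed from $k$ discs contains exactly $k$ eigenvalues (second part of Lemma~\ref{lem:gershgorin}). And your rescaling by $N$, with its $\log N$ correction, is harmless but unnecessary.
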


Each sign is established by a certificate in the supplementary artifact;
the verifier reduces each to a finite list of big-integer comparisons
(format: Appendix~\ref{app:format}; soundness: Appendix~\ref{app:lemmas};
verifier: Appendix~\ref{app:verifier}). The certified enclosures, as
exact rationals rounded outward to six figures here, are:

\begin{table*}[t]
\centering
\begin{tabular}{llll}
\toprule
State & $p$ (exact) & $p$ (decimal) & $I_c$ enclosure (bits) \\
\midrule
$R$ & $4057/62500$ & $0.064912$ & $[5.79717\times 10^{-8},\; 6.17746\times 10^{-8}]$ \\
$R$ & $16229/250000$ & $0.064916$ & $[-1.36795\times 10^{-7},\; -1.32951\times 10^{-7}]$ \\
$A$ & $16239/250000$ & $0.064956$ & $[8.38952\times 10^{-8},\; 8.70755\times 10^{-8}]$ \\
$A$ & $203/3125$ & $0.064960$ & $[-4.35188\times 10^{-8},\; -4.03627\times 10^{-8}]$ \\
\bottomrule
\end{tabular}
\caption{The four certified enclosures of Fact~\ref{fact:signs}, exact
rationals rounded outward to six figures.}
\label{tab:enclosures}
\end{table*}

\paragraph{What is recomputed vs.\ hinted.} The verifier recomputes, in
exact rational arithmetic from the payload bytes alone, the block
matrices $B_f$ at the four rational values of $p$. The only untrusted
hints a certificate supplies are a rounded binary64 center matrix and a
candidate eigenbasis per block; each hint is fenced (conversion check,
Gram check, polar-factor bound, transform bound, disc isolation) before
anything is concluded from it, and every analytic inequality is
pre-cleared of denominators so that the final comparisons involve
integers only. Nothing load-bearing escapes a check.

\section{Main results (Tier 1, standing on Sections 3--5)}
\label{sec:results}

\begin{theorem}[confinement and separation]
\label{thm:confinement}
On $[0,1/4]$:
\begin{enumerate}
\item every zero of $f_R$ lies in $(4057/62500,\; 16229/250000)$;
\item every zero of $f_A$ lies in $(16239/250000,\; 203/3125)$;
\item both zero sets are nonempty;
\item every zero of $f_A$ exceeds every zero of $f_R$ by more than
$1/25000$.
\end{enumerate}
\end{theorem}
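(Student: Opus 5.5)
The plan combines the four certified signs of Fact~\ref{fact:signs} with the monotonicity of Lemma~\ref{lem:monotone}, plus continuity of $f_X$ in $p$ for item 3. Throughout, $X\in\{R,A\}$, and we write $a_X<b_X$ for the two certified points: $a_R=4057/62500$, $b_R=16229/250000$, $a_A=16239/250000$, $b_A=203/3125$. All four lie in $[0,1/4]$.

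\emph{Items 1 and 2.} Let $z\in[0,1/4]$ be a zero of $f_X$.
\begin{itemize}
\item If $z\le a_X$, monotonicity gives $f_X(z)\ge f_X(a_X)>0$, which is a contradiction.
\item If $z\ge b_X$, monotonicity gives $f_X(z)\le f_X(b_X)<0$, again a contradiction.
\end{itemize}
Hence $z\in(a_X,b_X)$. This step uses only Lemma~\ref{lem:monotone} and Fact~\ref{fact:signs}.

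\emph{Item 3.} We need continuity of $p\mapsto f_X(p)$ on $[a_X,b_X]$. I would argue it as follows. The joint output $(\id\otimes\Dp^{\otimes 45})(\psi_{\rho_X})$ is a finite-dimensional matrix whose entries are polynomial in $p$; by Proposition~\ref{prop:affine}, each tensor factor is affine in $p$. The same holds for its marginal on the channel output. Von Neumann entropy is continuous on density matrices of fixed finite dimension, since $x\mapsto -x\log x$ is continuous on $[0,1]$ and eigenvalues depend continuously on the matrix. So $f_X$, a difference of two such entropies, is continuous. The intermediate value theorem then gives a zero in $(a_X,b_X)$. Alternatively, the block form $\sum_f m_f[H(\Tr_{\mathrm{ref}}B_f)-H(B_f)]$ gives the same conclusion, with entries of $B_f$ continuous in $p$. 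Here the square-root normalization is a constant depending only on the payload, not on $p$.

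\emph{Item 4.} By items 1 and 2, any zero $z_A$ of $f_A$ and any zero $z_R$ of $f_R$ satisfy $z_A>16239/250000$ and $z_R<16229/250000$. Therefore $z_A-z_R>10/250000=1/25000$, which is the claimed strict separation.

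The only step needing care is the continuity claim in item 3. I expect no real obstacle there, since everything is finite-dimensional. I would state explicitly that eigenvalue continuity (e.g.\ Weyl's inequality) together with continuity of $\eta(x)=-x\log_2x$ suffices.
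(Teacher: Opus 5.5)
Your proof is correct and follows the paper's argument: monotonicity (Lemma~\ref{lem:monotone}) plus the four certified signs confine every zero to the open brackets, continuity plus the intermediate value theorem gives nonemptiness, and $16239/250000 - 16229/250000 = 1/25000$ with both inequalities strict gives item 4. Your justification of continuity (entries polynomial in $p$, continuity of eigenvalues and of $-x\log_2 x$) spells out a step the paper only invokes. One small slip: the square-root factor in Theorem~E depends on the indices rather than on the payload. This does not matter, since that factor is independent of $p$.
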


\begin{proof}
By Lemma~\ref{lem:monotone}, $f_R$ is nonincreasing; by
Fact~\ref{fact:signs} it is positive at $4057/62500$ and negative at
$16229/250000$. Monotonicity forces $f_R > 0$ on $[0, 4057/62500]$ and
$f_R < 0$ on $[16229/250000, 1/4]$, so all zeros lie between the
endpoints; continuity of $f_R$ in $p$ and the intermediate value theorem
give at least one zero. The same argument delivers (2) and (3) for
$f_A$. For (4): $16239/250000 - 16229/250000 = 10/250000 = 1/25000$, and
every zero of $f_A$ is strictly above $16239/250000$ while every zero of
$f_R$ is strictly below $16229/250000$.
\end{proof}

\begin{corollary}[threshold]
\label{cor:threshold}
$Q(\Dp) \ge f_A(p)/45 > 0$ for every $p \le 16239/250000 = 0.064956$.
\end{corollary}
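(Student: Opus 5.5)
The plan is to chain the certified sign at $p_0 = 16239/250000$, the monotonicity lemma, and the LSD bound. The corollary is implicitly about $p \in [0, p_0]$: $\Dp$ is only CPTP for $p \ge 0$ and monotonicity is only proved on $[0,1/4]$, and $p_0 < 1/4$. So I fix $p \in [0, p_0]$.

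\textbf{Step 1 (positivity at $p$).} Fact~\ref{fact:signs} gives $f_A(p_0) > 0$. Lemma~\ref{lem:monotone}, applied to the fixed input $\rho_A$, says $f_A$ is nonincreasing on $[0,1/4]$. Since $0 \le p \le p_0 \le 1/4$, we get $f_A(p) \ge f_A(p_0) > 0$. This is also exactly part of the proof of Theorem~\ref{thm:confinement}(2), which already recorded $f_A > 0$ on $[0, 16239/250000]$; I would cite it directly.

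\textbf{Step 2 (lower bound on $Q$).} Apply the LSD bound from Section~\ref{sec:setting} with $n = 45$ and $\rho = \rho_A$:
\[
Q(\Dp) \ge I_c(\rho_A, \Dp^{\otimes 45})/45 = f_A(p)/45 .
\]
Here $\rho_A$ is a legitimate 45-qubit density operator. It is the normalized rank-two mixture specified by the exact dyadic payload, with normalization $N$ as in the block decomposition. Combining with Step~1 gives $Q(\Dp) \ge f_A(p)/45 > 0$.

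The only point needing care is bookkeeping rather than mathematics. One must make sure the certified object in Fact~\ref{fact:signs}, namely the exact dyadic interpretation of the payload bytes, is the same state fed into the LSD bound, so that no floating-point neighborhood intervenes. One must also respect the range restriction to $[0,1/4]$ flagged in the scope remark after Proposition~\ref{prop:composition}. Beyond that there is no obstacle: all the difficulty sits in the certificate behind Fact~\ref{fact:signs}, which is taken as given.
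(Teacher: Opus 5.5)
Your proposal is correct and follows the paper's own proof: Lemma~\ref{lem:monotone} together with the certified sign $f_A(16239/250000) > 0$ from Fact~\ref{fact:signs} gives $f_A(p) \ge f_A(16239/250000) > 0$ on $[0, 16239/250000]$, and the LSD bound with $n = 45$ gives $Q(\Dp) \ge f_A(p)/45$. Your explicit restriction to $p \ge 0$, and your check that the certified dyadic state is the same state fed into the LSD bound, are sensible bookkeeping that the paper leaves implicit.
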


\begin{proof}
Lemma~\ref{lem:monotone} plus Fact~\ref{fact:signs} give $f_A(p) \ge
f_A(16239/250000) > 0$ on the stated interval; the LSD bound divides by
$n = 45$.
\end{proof}

Worded precisely: the best previously \emph{reported} positivity point in
this convention is $0.064657$ (numerical, Ref.~\cite{agarwal2026}); the
strongest value in the same authors' public repository is
$\approx 0.0649112$, also numerical. Corollary~\ref{cor:threshold} is, to
our knowledge, both the largest reported point and the first in this
regime established in exact arithmetic rather than by numerical
evaluation. The claim is about proof status and the exact point; we
make no claim of optimality over unpublished states
(Section~\ref{sec:notclaimed}).

\begin{remark}[full CPTP range]
The channel is antidegradable for $p \ge 1/12$ --- total error rate
$3p \ge 1/4$, the form in which the bound is usually
quoted~\cite{cerf1998,crs2008}, e.g.\ in the concurrent strong-converse
work of Kondra et al.~\cite{kondra2026} --- whence $Q = 0$ there by the
no-cloning bound~\cite{antidegradable}; it is entanglement-breaking on
$[1/6,1/3]$~\cite{hsr2003,ruskai2003} (strictly weaker for this purpose,
kept as a remark). Combined with Theorem~\ref{thm:confinement}, the
fixed-state positivity boundaries $\tau_R, \tau_A$ obey the same brackets
within the full CPTP range $[0,1/3]$.
\end{remark}

\section{What is not claimed}
\label{sec:notclaimed}

\begin{itemize}
\item No uniqueness of zeros: Theorem~\ref{thm:confinement} confines zero
\emph{sets}; we do not claim $f_R$ or $f_A$ crosses once.
\item No exact threshold and no optimality: $A$ is one explicit state; we
do not claim it is optimal within any family, nor that $16239/250000$
approximates the true capacity threshold from below by any stated gap.
\item No family-wide or all-state statement, and no capacity formula.
\item No zero-capacity claim anywhere below $1/12$.
\item No pointwise comparison $f_A > f_R$ across $[0,1/3]$.
\item Monotonicity is claimed only on $[0,1/4]$
(Section~\ref{sec:structure}, Remark).
\end{itemize}

\section{Discussion}
\label{sec:discussion}

The certified point stands at $0.064956$ against the no-cloning ceiling
$1/12 = 0.0833\ldots$; the gap remains large, and nothing here suggests
where in it the true threshold lies. On the converse side, we are not
aware of any printed zero-capacity threshold below the antidegradability
point $p = 1/12$: the flagged-extension upper bounds on
$Q(\mathcal{D}_p)$~\cite{fanizza2020,kianvash2022} tighten the capacity
inside the interval, and partial-order and low-noise methods constrain
it near the ends~\cite{hircheleditzky2023,lls2018}, without moving that
threshold. The contribution we expect to travel
is the method: coherent-information margins of order $10^{-7}$ bits are
exactly where superadditivity debates live and exactly where floating
point alone stops being evidence for a sign. The pattern used here --- exact rational
block arithmetic from a byte-level witness, untrusted hints fenced by
cheap checks, all inequalities flattened to integer comparisons,
soundness carried by three half-page lemmas --- applies verbatim to any
channel family with enough symmetry to collapse the block structure, and
in weakened form well beyond. Certified optimization loops that produce
such witnesses at scale are a natural next step.

Proof by certificate has precedents, and we claim the application, not
the pattern. Outside quantum information, computer-assisted proofs from
the four-color theorem to the Kepler conjecture were eventually reduced
to independently checkable objects~\cite{gonthier2008,hales2017}, SAT
solvers emit proofs that a small trusted checker
replays~\cite{wetzler2014} --- at the extreme, the Boolean Pythagorean
triples proof, whose $\sim$200-terabyte certificate is replayed by a
small independent checker~\cite{heule2016} --- and verified numerics
enclose spectra in interval arithmetic~\cite{rump1999}. Inside quantum
information, semidefinite-programming duality yields rigorous converse
bounds~\cite{wangfangduan2019}, rational certificates have recently
upgraded numerical bounds on quantum codes~\cite{anglesmunne2026} and,
more generally, exact rational bounds are now extracted from
floating-point semidefinite relaxations of quantum optimization
problems~\cite{naceur2025}, and proof assistants are beginning to
formalize the theory itself~\cite{leanqit2026,leanquantum2026}. What is new here is an
exact-arithmetic sign proof for depolarizing coherent information at
record noise, packaged so that a dependency-free verifier replays it
from the published bytes.

\section*{Acknowledgments}

State $A$ was obtained by numerical optimization initialized from the
public $n=45$ states of Agarwal, Kalra, Lee, Leung, Schaeffer, Sinha, and
Smith~\cite{agarwal2026}; we thank the authors for making their states
and evaluator public.

\section*{Contribution statement}

The sole author takes full responsibility for the entire content. Scope
of AI use in producing this work: the input state $A$ was found by an
AI-driven numerical search initialized from the public states credited
above; AI systems were further used for code production, for
calculations, for the mechanical derivation of proof steps, and for
drafting text under the author's direction. The verification chain
(Sections~\ref{sec:certified}--\ref{sec:results},
Appendices~\ref{app:lemmas}--\ref{app:derivation}) is independent of how
the state and the derivations were produced: every computational claim
reduces to big-integer comparisons checkable by the supplied
dependency-free verifier, and every imported mathematical statement is
either proved in the appendices or cited to the classical literature.

\appendix

\section{Three soundness lemmas}
\label{app:lemmas}

\begin{alemma}[inflated Gershgorin with component counting]
\label{lem:gershgorin}
Let $T$ be a $D\times D$ Hermitian matrix, and let $H = T + E$ be
Hermitian with $\|E\|_2 \le \varepsilon$. For each $i$ let $s_i \ge
\sum_{j\ne i} |T_{ij}|$ and set $\widehat G_i = [\,T_{ii} - s_i -
D\varepsilon,\; T_{ii} + s_i + D\varepsilon\,]$. Then every eigenvalue of
$H$ lies in $\bigcup_i \widehat G_i$, and if that union splits into
disjoint closed components, a component formed from $k$ intervals
contains exactly $k$ eigenvalues of $H$ (with multiplicity).
\end{alemma}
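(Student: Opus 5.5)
The plan is the classical Gershgorin continuity argument combined with Weyl's perturbation inequality. The inflation $D\varepsilon$ in the statement is more than is needed: every step below works with $\varepsilon$ in place of $D\varepsilon$. The lemma then follows because each $\widehat G_i$ contains the $\varepsilon$-inflated interval.

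\emph{Step 1 (containment).} I will use Weyl's inequality for Hermitian matrices in the following form. If $\lambda_1(M)\le\dots\le\lambda_D(M)$ denote the sorted eigenvalues, then $|\lambda_j(M)-\lambda_j(M')|\le\|M-M'\|_2$ for every $j$. Applied to $H=T+E$, each eigenvalue $\lambda_j(H)$ lies within $\varepsilon$ of $\lambda_j(T)$. By ordinary Gershgorin for $T$, $\lambda_j(T)$ lies in some $[T_{ii}-s_i,\,T_{ii}+s_i]$. Hence $\lambda_j(H)\in[T_{ii}-s_i-\varepsilon,\,T_{ii}+s_i+\varepsilon]\subseteq\widehat G_i$. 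For completeness I would recall the one-line proof of Gershgorin: take an eigenvector $v$ with $|v_i|$ maximal and read off row $i$.

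\emph{Step 2 (homotopy).} Write $T=\Delta+O$ with $\Delta$ the diagonal part and $O$ the off-diagonal part. For $t\in[0,1]$ set $M(t)=\Delta+tO+tE$, so $M(0)=\Delta$ and $M(1)=H$. Running Step 1 with $tO$ and $tE$ in place of $O$ and $E$ shows that every eigenvalue of $M(t)$ lies in
\[
\bigcup_i\,[T_{ii}-ts_i-t\varepsilon,\;T_{ii}+ts_i+t\varepsilon]\subseteq\bigcup_i\widehat G_i .
\]
The point is that the intervals only grow with $t$, and the $i$-th interval at time $t$ sits inside $\widehat G_i$ and still contains its centre $T_{ii}$.

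\emph{Step 3 (counting).} Weyl gives $|\lambda_j(M(t))-\lambda_j(M(t'))|\le|t-t'|\,\|O+E\|_2$, so each sorted eigenvalue path $t\mapsto\lambda_j(M(t))$ is continuous. Its image is connected and contained in the finite union of disjoint closed components $C_1,\dots,C_r$ of $\bigcup_i\widehat G_i$. Since these components are pairwise at positive distance, each path stays inside a single component. Therefore the number of eigenvalues of $M(t)$, with multiplicity, lying in a given component $C$ is independent of $t$.

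At $t=0$ the eigenvalues are exactly the centres $T_{ii}$. The centre $T_{ii}$ lies in $C$ precisely when $\widehat G_i\subseteq C$, because $T_{ii}\in\widehat G_i$ and $\widehat G_i$ is connected. So the count at $t=0$ equals the number $k$ of intervals forming $C$, and the same count holds at $t=1$, that is, for $H$.

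The main point needing care is Step 3: stating the continuity argument cleanly with multiplicities. Tracking the \emph{sorted} eigenvalue functions, which are Lipschitz by Weyl, avoids any appeal to analytic eigenvalue branches and handles crossings and multiplicities automatically.
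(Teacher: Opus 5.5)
Your proof is correct, and it differs from the paper's mainly in the containment step. The paper never compares the spectrum of $H$ with that of $T$. It bounds each entry by $|E_{ij}| \le \|E\|_2 \le \varepsilon$ and applies Gershgorin directly to $H$, whose intervals are centred at $T_{ii}+E_{ii}$ with radius at most $s_i+(D-1)\varepsilon$. The centre shift and the radius growth together account for the full $D\varepsilon$. Its counting step then runs the homotopy $\mathrm{diag}(H)+t\,(H-\mathrm{diag}(H))$ on $H$'s own diagonal/off-diagonal splitting, starting from the entries $H_{ii}$, and appeals to continuity of the eigenvalues of a Hermitian family without proving it.

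You instead apply Gershgorin to $T$ and move the intervals to $H$ with Weyl's inequality. Your homotopy $\Delta+t(O+E)$ starts from $T$'s diagonal, and Weyl also gives you Lipschitz continuity of the sorted eigenvalues. That makes the multiplicity bookkeeping in the counting step explicit rather than asserted.

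The paper's route buys a containment argument that uses nothing beyond Gershgorin and the trivial entrywise bound. Yours buys a sharper lemma: as you note, the whole argument goes through with intervals inflated by $\varepsilon$ instead of $D\varepsilon$. In the paper's application $D$ is as large as $92$, so that is a substantial factor. The paper's $D\varepsilon$ discs are therefore sound but more conservative than necessary, and your version would justify narrower enclosures in the verifier. The price is invoking Weyl's inequality (Courant--Fischer), which the paper's unproved continuity step implicitly relies on anyway.
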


\begin{proof}
Every entry of $E$ obeys $|E_{ij}| = |e_i^* E e_j| \le \|E\|_2 \le
\varepsilon$. Gershgorin's theorem applied to $H$ (Hermitian, so discs
are real intervals) places each eigenvalue in some interval centered at
$H_{ii} = T_{ii} + E_{ii}$ with radius $\sum_{j\ne i} |H_{ij}| \le s_i +
(D-1)\varepsilon$; since $|E_{ii}| \le \varepsilon$, that interval lies
inside $\widehat G_i$. For the count, put $H(t) = \mathrm{diag}(H) +
t\,(H - \mathrm{diag}(H))$ for $t\in[0,1]$. Each $H(t)$ is Hermitian with
the same centers and row sums scaled by $t$, so its Gershgorin intervals
sit inside those of $H(1) = H$, hence inside the $\widehat G_i$. At
$t = 0$ the eigenvalues are the $H_{ii}$, one in each $\widehat G_i$.
Eigenvalues of a Hermitian family depend continuously on $t$; a
continuous path starting in one closed component of the union cannot
reach a disjoint component without leaving the union, which no eigenvalue
of any $H(t)$ does. Hence the count in each component is constant in $t$
and equals the number of intervals forming it.
\end{proof}

\begin{alemma}[polar-factor perturbation]
\label{lem:polar}
If a square matrix $V$ satisfies $\|V^*V - I\|_F \le \delta < 1$, then
$V$ is invertible, its polar factor $U$ (the unitary in
$V = U(V^*V)^{1/2}$) exists, and
$$\|V - U\|_2 \;\le\; \max\!\big(1 - \sqrt{1-\delta},\;
\sqrt{1+\delta} - 1\big).$$
\end{alemma}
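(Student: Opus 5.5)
The plan is to reduce everything to the spectrum of the positive semidefinite matrix $P = V^*V$. From $\|P - I\|_F \le \delta$ and $\|\cdot\|_2 \le \|\cdot\|_F$ we get $\|P - I\|_2 \le \delta$. Since $P-I$ is Hermitian, every eigenvalue $\lambda$ of $P$ therefore lies in $[1-\delta,\,1+\delta]$. Because $\delta < 1$, all these eigenvalues are strictly positive, so $P$ is invertible. Then $\det(V^*V) = |\det V|^2 \ne 0$, and $V$ is invertible.

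With $P$ positive definite, define $|V| = P^{1/2}$ via the spectral theorem and set $U = V P^{-1/2}$. Unitarity is immediate: $U^*U = P^{-1/2} V^*V P^{-1/2} = I$, and a square matrix with $U^*U = I$ is unitary. By construction $V = U P^{1/2}$, which exhibits the polar factor in the stated form. Uniqueness is not needed for the bound, but it follows because $P^{1/2}$ is invertible.

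For the estimate, write $V - U = U(P^{1/2} - I)$. Since $U$ is unitary, $\|V - U\|_2 = \|P^{1/2} - I\|_2$. The matrix $P^{1/2} - I$ is Hermitian and diagonal in the eigenbasis of $P$, with eigenvalues $\sqrt{\lambda} - 1$. Its operator norm is therefore $\max_\lambda |\sqrt\lambda - 1|$ over the spectrum of $P$. The function $\lambda \mapsto |\sqrt\lambda - 1|$ decreases on $[1-\delta,1]$ and increases on $[1,1+\delta]$. Its maximum over $[1-\delta,1+\delta]$ is thus attained at an endpoint, which gives exactly $\max(1-\sqrt{1-\delta},\,\sqrt{1+\delta}-1)$.

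There is no real obstacle here. The only step needing care is the passage from the Frobenius hypothesis to a spectral enclosure of $P$, and this is just $\|\cdot\|_2 \le \|\cdot\|_F$ combined with the Hermitian eigenvalue characterization of the operator norm. Everything else is functional calculus on a positive definite matrix.
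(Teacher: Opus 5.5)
Your proposal is correct and follows essentially the same route as the paper. Both arguments confine the spectrum of $V^*V$ to $[1-\delta,1+\delta]$ via $\|\cdot\|_2\le\|\cdot\|_F$, and both write $V-U=U\bigl((V^*V)^{1/2}-I\bigr)$ to reduce the bound to $\max|\sqrt{\lambda}-1|$, which is at most the endpoint maximum. The only addition on your side is that you construct $U=VP^{-1/2}$ explicitly and spell out the monotonicity of $|\sqrt{\lambda}-1|$.
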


\begin{proof}
Every eigenvalue of the Hermitian matrix $V^*V$ differs from $1$ by at
most $\|V^*V - I\|_2 \le \|V^*V - I\|_F \le \delta$, so every singular
value $\sigma$ of $V$ lies in $[\sqrt{1-\delta}, \sqrt{1+\delta}]$ with
$\sqrt{1-\delta} > 0$; in particular $V$ is invertible and $U$ exists.
Then $V - U = U\big((V^*V)^{1/2} - I\big)$ and, since $U$ is unitary,
$\|V - U\|_2 = \|(V^*V)^{1/2} - I\|_2 = \max_\sigma |\sigma - 1|$, which
is at most the stated maximum.
\end{proof}

\begin{acorollary}
Let $C$ be the exact Hermitian block, $B$ the recorded binary64 center
with $\|C - B\|_F \le M$, $V$ the recorded candidate basis with Gram
defect $\delta$, $U$ its polar factor, and $\rho$ the bound of
Lemma~\ref{lem:polar}. Since $U^* C U$ has the spectrum of $C$, and
\begin{multline*}
\|U^* C U - V^* B V\|_2 \le \|C - B\|_2\\
+ \|U^* B U - V^* B V\|_2
\le M + \rho\,\|B\|_F\,(1 + \|V\|_2),
\end{multline*}
(expanding $U^*BU - V^*BV = U^*B(U-V) + (U-V)^*BV$ and using $\|B\|_2 \le
\|B\|_F$), the spectrum of $C$ is that of the Hermitian matrix
$T = V^* B V$ perturbed by at most $\varepsilon = M +
\rho\|B\|_F(1+\|V\|_2)$ --- exactly the inflation
Lemma~\ref{lem:gershgorin} consumes. This is the certificate's
\texttt{transform\_error} field.
\end{acorollary}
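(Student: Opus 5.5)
The plan is to prove the transform bound in three steps: spectral invariance of $C$ under the unitary $U$, a triangle inequality through $U^*BU$, and an algebraic split of $U^*BU - V^*BV$ bounded via Lemma~\ref{lem:polar}.

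\emph{Spectral invariance.} Since $\|V^*V - I\|_F \le \delta < 1$ (this must be checked, and the verifier checks it), Lemma~\ref{lem:polar} shows that $V$ is invertible and has a unitary polar factor $U$ with $\|V-U\|_2 \le \rho$. Then $U^*CU$ is unitarily similar to $C$, so the two have identical spectra with multiplicity. The target is therefore Lemma~\ref{lem:gershgorin} with $H = U^*CU$ and $T = V^*BV$. Both matrices are Hermitian, because $C$ and $B$ are; for $B$ this is the recorded center, which we take to be stored Hermitian, or else symmetrize it and check. It remains to show $\|H - T\|_2 \le \varepsilon$.

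\emph{Triangle inequality.} Write
\[
H - T = U^*(C-B)U + \bigl(U^*BU - V^*BV\bigr).
\]
For the first term, unitary invariance gives $\|U^*(C-B)U\|_2 = \|C-B\|_2 \le \|C-B\|_F \le M$.

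\emph{The mixed term.} Use the exact identity
\[
U^*BU - V^*BV = U^*B(U-V) + (U-V)^*BV,
\]
which one verifies by expanding the right side: the cross terms $U^*BV$ cancel. Submultiplicativity then gives
\[
\|U^*B(U-V)\|_2 \le \|B\|_2\,\rho \quad\text{and}\quad \|(U-V)^*BV\|_2 \le \rho\,\|B\|_2\,\|V\|_2,
\]
and $\|B\|_2 \le \|B\|_F$ bounds both by $\rho\|B\|_F$ times $1$ and $\|V\|_2$ respectively. Summing yields $\varepsilon = M + \rho\|B\|_F(1+\|V\|_2)$.

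\emph{Certifiable quantities.} For the bound to be usable, $\|V\|_2$ must be replaced by a certifiable upper bound. I would use $\|V\|_2 \le \sqrt{1+\delta}$, which follows from the singular-value bracket in the proof of Lemma~\ref{lem:polar}; the cruder $\|V\|_F$ would also work. Likewise $M$ is certified from exact rational entries of $C - B$, and $\|B\|_F$ is computed exactly, since $B$ is dyadic.

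The main obstacle is not the inequality chain, which is elementary, but making sure the corollary's hypotheses exactly match what Lemma~\ref{lem:gershgorin} consumes. In particular, $T = V^*BV$ must be computed exactly (rationally) from the hint, and not approximated, so that its diagonal entries and off-diagonal sums $s_i$ are genuine. Also, the norm used for $E = H - T$ must be the spectral norm, as Lemma~\ref{lem:gershgorin} requires. I would state these explicitly as a closing sentence.
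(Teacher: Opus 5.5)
Your proposal is correct and follows the paper's argument essentially verbatim. Both use unitary invariance of the spectrum under $U$, the triangle inequality through $U^*BU$, and the split $U^*B(U-V)+(U-V)^*BV$, bounded via $\|U-V\|_2\le\rho$ from Lemma~\ref{lem:polar} and $\|B\|_2\le\|B\|_F$; your added remark that $\|V\|_2\le\sqrt{1+\delta}$ gives a certifiable norm bound is a valid consequence of the singular-value bracket in that lemma's proof.
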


\begin{alemma}[entropy enclosure]
\label{lem:entropy}
Let a positive-semidefinite Hermitian block with trace at most $1$ have
its spectrum confined, by Lemmas~\ref{lem:gershgorin}--\ref{lem:polar},
to disjoint components $[a_c,b_c]\cap[0,1]$ containing $k_c$ eigenvalues
each. Write $h(x) = -x\log_2 x$ on $[0,1]$ (with $h(0) = h(1) = 0$). Then
the von Neumann entropy $S = \sum_i h(\lambda_i)$ satisfies
$$\sum_c k_c\,\underline h_c \;\le\; S \;\le\; \sum_c k_c\,\overline h_c,$$
where, for a component with endpoints $a \le b$: $\underline h =
\min(h(a), h(b))$; $\overline h = h(b)$ if $b \le \underline{x}$,
$\overline h = h(a)$ if $a \ge \overline{x}$, and $\overline h = 1$
otherwise, with $\underline{x} = 367879/10^6 < 1/e < 367880/10^6 =
\overline{x}$ (an inequality the verifier proves at startup from the
directed logarithm bounds below).
\end{alemma}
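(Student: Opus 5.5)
The plan is to bound $S$ term by term. Since $S = \sum_i h(\lambda_i)$ and each eigenvalue lies in exactly one component, the sum splits as $\sum_c \sum_{i \in c} h(\lambda_i)$. The inner sum has exactly $k_c$ terms, by the component counting of Lemma~\ref{lem:gershgorin}. It therefore suffices to show, for a single eigenvalue $\lambda \in [a,b]\cap[0,1]$, that $\underline h \le h(\lambda) \le \overline h$. Summing with multiplicities $k_c$ then gives both inequalities.

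The preliminary facts about $h$ are these. Positive semidefiniteness together with trace at most $1$ places every eigenvalue in $[0,1]$. On that interval, $h$ is continuous (with $h(0)=0$) and strictly concave, since $h''(x) = -1/(x\ln 2) < 0$. Its derivative $h'(x) = -(\ln x + 1)/\ln 2$ vanishes only at $x = 1/e$. So $h$ is increasing on $[0,1/e]$ and decreasing on $[1/e,1]$, with maximum $h(1/e) = 1/(e\ln 2) \approx 0.5307 \le 1$.

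The lower bound comes from concavity: a concave function on an interval attains its minimum at an endpoint. If the components are taken as $[a,b]$ already clipped to $[0,1]$, so that $h(a)$ and $h(b)$ are defined, then $h(\lambda) \ge \min(h(a),h(b))$.

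The upper bound splits into three cases.
\begin{itemize}
\item If $b \le \underline{x} < 1/e$, the component lies in the increasing region, so $h(\lambda) \le h(b)$.
\item If $a \ge \overline{x} > 1/e$, the component lies in the decreasing region, so $h(\lambda) \le h(a)$.
\item Otherwise, $h(\lambda) \le \max h = 1/(e\ln 2) < 1$.
\end{itemize}
The only nontrivial input is the strict bracketing $\underline{x} < 1/e < \overline{x}$. The statement assigns this to the verifier, so I would take it as given. If I wanted it by hand, I would use $e^{-1}$ bounds from alternating-series truncations.

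The main point of care is the clipping convention. The Gershgorin intervals may extend below $0$ or above $1$, where $h$ is undefined. The intended reading is that $a,b$ denote the endpoints of $[a_c,b_c]\cap[0,1]$, and the eigenvalue count is unaffected because all eigenvalues already lie in $[0,1]$. There is also a subtlety if a component's intersection with $[0,1]$ is empty. In that case the component contains no eigenvalues, so $k_c=0$ and the term contributes nothing to either bound.
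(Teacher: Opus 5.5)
Your proposal is correct and follows essentially the same route as the paper. Positivity and the trace bound place the eigenvalues in $[0,1]$, concavity of $h$ puts the minimum at an endpoint, monotonicity on either side of $1/e$ handles the first two upper-bound cases, and the global maximum $h(1/e)=\log_2(e)/e<1$ justifies the cap; the paper's proof adds only a description of how the verifier encloses $h(a),h(b)$ at rational arguments by directed truncations of a logarithm series, which concerns computing $\underline h_c,\overline h_c$ rather than the inequality itself.
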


\begin{proof}
Clipping to $[0,1]$ preserves containment because the true eigenvalues of
a PSD block of trace $\le 1$ lie in $[0,1]$. $h$ is concave on $[0,1]$
with $h'(x) = -(\ln x + 1)/\ln 2$, so $h$ increases on $(0,1/e)$,
decreases on $(1/e,1)$, and attains its maximum $h(1/e) = \log_2(e)/e <
0.5308 < 1$. On an interval avoiding $1/e$, $h$ is monotone, so its range
is between the endpoint values; on an interval that may contain $1/e$,
the minimum is still at an endpoint (concavity) and the maximum is at
most $h(1/e) < 1$, so the cap $\overline h = 1$ is valid; the four
shipped certificates remain sign-decisive with it. Summing over
components with their counts $k_c$ gives the display. Endpoint values
$h(a), h(b)$ at rational arguments are enclosed by directed rational
bounds on $\log a$: write $a = v\cdot 2^{-k}$ with $v\in[1,2]$, and use
\begin{gather*}
\log v = 2\sum_{k=0}^{95} \frac{t^{2k+1}}{2k+1} + R, \qquad
t = \frac{v-1}{v+1},\\
0 \le R \le \frac{2\,t^{193}}{193\,(1-t^2)},
\end{gather*}
where every summand is a positive rational and the tail bound is a
geometric-series comparison; floor/ceil each partial quantity on the
$10^{-70}$ grid in the direction that widens the enclosure, and divide by
an enclosure of $\ln 2$ obtained from the same series. All operations
are rational, all roundings directed, so the resulting bounds on $h$ are
rigorous.
\end{proof}

\section{Certificate format}
\label{app:format}

Each of the four endpoints ships with one certificate: a single JSON file
that records the complete enclosure chain of the evaluator run at that
endpoint, in exact arithmetic, so that the verifier of
Appendix~\ref{app:verifier} can replay every step without floating point
and without trusting the evaluator. This appendix specifies the format;
the normative statement is the schema itself
(\texttt{physics-proofs/}\allowbreak\texttt{flattened-rank-two-endpoint/}\allowbreak\texttt{v1}, pinned by the
certificate's first field).

\paragraph{Exactness conventions.} Five primitive encodings appear, all
built from canonical decimal integer strings (no signs on denominators,
no leading zeros): an integer \texttt{I}; a reduced rational \texttt{Q}
as a numerator/positive-denominator pair; a dyadic \texttt{D} as a
mantissa/exponent pair meaning $m\cdot 2^e$, with the zero dyadic written
\texttt{["0","0"]} and every nonzero mantissa odd (so each dyadic has
exactly one encoding); and Gaussian variants \texttt{GQ} and \texttt{GD}
giving real and imaginary parts. Matrices are stored row-major with
explicit dimensions. Because every value is a decimal string, the
certificate is independent of any binary floating-point format, and
canonical-form violations (an unreduced fraction, an even mantissa, a
re-encoded zero) are parse errors, not tolerated variants.

\paragraph{Contents.} The top level pins the schema name, $n = 45$, the
raw 1472-byte payload as hex together with its SHA-256, the exact channel
parameter $p$ as a reduced fraction, and the four working-precision
constants (square-root witness precision 224 bits; 96 logarithm series
terms; log grid at $10^{-70}$; entropy accumulation at $10^{-50}$). Then
come the twenty-three sector blocks $f = 23,\dots,45$ in order, each
carrying: the sector's degree $d = 2f - 45$ and Specht multiplicity
$m_f$; the exact block-trace enclosure; and two spectral records --- one
for the joint block $B_f$ (dimension $2(d+1)$) and one for its reference
partial trace (dimension $d+1$). A spectral record stores the rounded
dyadic center matrix $B$, the candidate eigenbasis $V$ (the only two
load-bearing hints in the entire certificate;
Appendix~\ref{app:verifier} explains how they are fenced), the ten
rational bounds of the Appendix~\ref{app:lemmas} chain (Frobenius
conversion and total errors, Gram defect, the two square-root witnesses,
polar distance, norm bounds, and the assembled transform error), the
per-row off-diagonal sums with their inflated Gershgorin discs, the
merged spectral components with multiplicities, clipped intervals,
endpoint entropy rationals and the $1/e$-cap flag, and the resulting
scaled entropy integers. The file closes with the global trace enclosure,
the six final scaled integers (directed $10^{-50}$-scaled lower and upper
bounds for the output entropy, the joint entropy, and the coherent
information), and the classification \texttt{POS} or \texttt{NEG}.

\paragraph{Sizes.} For $n = 45$ one matrix family (all blocks, joint plus
partial trace) has $\sum_d \big(d^2 + (2d)^2\big) = 86{,}480$ complex
entries; recording both $B$ and $V$ gives $172{,}960$ Gaussian dyadics,
which dominates the file. Result rationals reach a few hundred decimal
digits; all certificate operands stay below roughly 6{,}000 bits, and the
verifier enforces a hard 20{,}000-bit cap on every parsed integer
together with array-size caps derived from the fixed $n = 45$ block
topology, so a malicious certificate cannot make the verifier allocate
unbounded memory. Measured per-endpoint file sizes: 17{,}444{,}604
bytes ($R$ at $p = 4057/62500$), 17{,}442{,}411 ($R$ at $16229/250000$),
17{,}414{,}593 ($A$ at $16239/250000$), and 17{,}402{,}776 ($A$ at
$203/3125$). Certificate SHA-256s are listed in
Appendix~\ref{app:repro} and in the artifact manifest.

\section{The verifier}
\label{app:verifier}

The verifier is a single Python file in the artifact
(\texttt{verifier/verify.py}) with no dependencies outside the standard
library. Its arithmetic is \texttt{int} and \texttt{fractions.Fraction}
only; no float is ever constructed in a load-bearing step, and no
\texttt{assert} statement guards a check (so running under
\texttt{python3 -O} cannot disable anything). It reads one certificate,
the expected payload SHA-256, and the expected $p$, prints exactly one
line --- \texttt{ACCEPT <classification> coherent in [<lo>,<hi>]x1e-50}
or \texttt{REJECT <reason>} --- and exits 0 or 1.

\paragraph{What is trusted.} Nothing in the certificate is taken on faith
except two \emph{fenced hints} per spectral record: the rounded center
matrix $B$ and the candidate eigenbasis $V$. These are hints in the
strict sense --- they steer the computation toward a particular (nearly
diagonalizing) frame, but every consequence drawn from them is
re-established by the verifier's own integer inequalities: $B$ must be
Hermitian and within the recorded Frobenius distance of the exactly
recomputed block; $V$ must pass the Gram-defect, square-root-witness, and
polar-distance obligations of Lemma~\ref{lem:polar}; the conjugated
matrix $V^\dagger B V$ is recomputed exactly and its inflated Gershgorin
discs (Lemma~\ref{lem:gershgorin}) are rebuilt from scratch. A wrong hint
can only cause rejection or a wider (hence more conservative) enclosure
--- never a wrong accepted sign. Every other recorded field (radii,
discs, components, entropies, traces, scaled integers, classification) is
\emph{derived data}: the verifier recomputes it and requires exact
equality, or, where the field is a bound, requires the recorded value to
satisfy the checked inequality in the conservative direction.

\paragraph{Algorithm.} (1) \emph{Startup self-test}: before touching the
certificate, the verifier proves $367879/10^6 < 1/e < 367880/10^6$ from
its own directed logarithm bounds (Lemma~\ref{lem:entropy}'s series with
integer floor/ceil at $10^{-70}$); this exercises the entire log
machinery on a known constant and anchors the $1/e$ entropy cap. (2)
\emph{Strict parse}: canonical types, the bit and array caps above,
pinned schema, $n$, constants, expected $p$, and expected payload digest.
(3) \emph{Payload decode}: the 1472 payload bytes are decoded to IEEE-754
binary64 values by explicit integer bit-field arithmetic (sign, exponent,
mantissa) --- not by the platform's float parser --- and immediately
converted to exact dyadics; the state's norm and Gram determinant are
checked positive exactly. (4) \emph{Per-block replay}, for
$f = 23,\dots,45$ in order: the reference block and its partial trace are
recomputed in exact rational arithmetic from the decoded payload via the
formula of Theorem~E (this is the expensive step); the recorded trace
enclosure is checked; then for each of the two spectral records the hint
obligations above are checked, the discs are rebuilt, the deterministic
disc-merge into components is replayed and required to match, and every
atanh/log/entropy interval is recomputed by cleared-denominator integer
comparisons and required to match the recorded rationals and scaled
integers. (5) \emph{Assembly}: multiplicity-weighted directed sums are
accumulated as scaled integers, the weighted global trace is checked to
contain 1, and the six final scaled integers and the classification must
match exactly. The final sign check is one integer comparison:
\texttt{coherent\_lower > 0} (POS) or \texttt{coherent\_upper < 0} (NEG).

\paragraph{Determinism and adversarial testing.} The verifier's output is
byte-identical across runs and platforms for identical inputs. The
artifact includes a mutation suite of seventeen adversarial certificates,
each derived from a genuine one by a single targeted corruption (a
flipped payload byte, a widened disc radius presented as narrower, a
tampered scaled integer, a de-canonicalized encoding, a reordered block,
a forged classification, \dots), each of which must be rejected with the
specific check named in its construction. Against the final emitter and
verifier, all seventeen mutations are rejected with exactly the named
reason and the unmutated certificate is accepted (eighteen tests, 59
minutes, single suite run). Measured verifier runtime is 27--33 minutes
per endpoint on an Apple M4 Pro (four verifications running
concurrently) and 72 minutes on a passively cooled MacBook Air;
certificate emission by the instrumented evaluator took 34--38 minutes
per endpoint on the M4 Pro and 77 minutes for a single endpoint on the
Air. The verifier source
(\texttt{verifier/verify.py}) has SHA-256
\texttt{\small 59662e7094bec9fafd6e084668f09b34}\allowbreak\texttt{\small 4824692fb88869bad64382865140ebb7}.

\section{Reproducibility}
\label{app:repro}

\paragraph{Artifact layout.} The artifact is archived as a versioned
public deposit at
DOI~\href{https://doi.org/10.5281/zenodo.21962925}{10.5281/zenodo.21962925}
(19{,}961{,}426-byte tarball, SHA-256
\texttt{\small e365a001d99aa310163c606ebce72171}\allowbreak\texttt{\small d680a99f5b7fc44cc53165d29afa17cc}).
It contains: \texttt{payloads/} --- the two raw 1472-byte state payloads,
SHA-256 as in Section~\ref{sec:setting}, together with
\texttt{payloads/notebook/} holding both states as
\texttt{torch.load}-ready tensors in the Ref.~\cite{agarwal2026}
repository's own format; \texttt{certificates/} --- the four endpoint
certificates of Appendix~\ref{app:format};
\texttt{verifier/verify.py} --- the standalone verifier of
Appendix~\ref{app:verifier}; \texttt{evaluator/} --- the exact evaluator
source (SHA-256
\texttt{\small ce75b1df1b327726e58a92b8926c9843}\allowbreak\texttt{\small cd9e668176340799a905ec5c3cd1ae34}),
which is \emph{not} needed for verification and is included only for
inspection and re-derivation; and \texttt{MANIFEST.txt} listing every
file with its SHA-256. The verifier SHA-256 is
\texttt{\small 59662e7094bec9fafd6e084668f09b34}\allowbreak\texttt{\small 4824692fb88869bad64382865140ebb7};
the four certificate SHA-256s are
\texttt{\small 795e55f3aa8bb4e1fd7e19f483b76b5b}\allowbreak\texttt{\small 1828cd83377736c009f253f410cb6964}
($R$, $p = 4057/62500$),
\texttt{\small 4866b9561d455c35efb08cd4cea651aa}\allowbreak\texttt{\small 3e9fce6314cdd776cfb3ed58829ce8b2}
($R$, $16229/250000$),
\texttt{\small 24a7fecc6cfdd5840696ac1d30220636}\allowbreak\texttt{\small 05d6e32544088b2a7a41733e4b783212}
($A$, $16239/250000$), and
\texttt{\small f15738747ca57b1e9f4d9b3fe5b2e35b}\allowbreak\texttt{\small 7e4ed6b04c3cc99145bed3db7f83413c}
($A$, $203/3125$).

\paragraph{The one verification command.} For each endpoint,
{\footnotesize
\begin{verbatim}
python3 verifier/verify.py \
    certificates/<endpoint>.json \
    <payload_sha256> <p_num>/<p_den>
\end{verbatim}
}
with the four (payload, $p$) pairs from the table in
Fact~\ref{fact:signs}. Four \texttt{ACCEPT} lines reproduce
Fact~\ref{fact:signs}, and with it --- given the three lemmas of
Appendix~\ref{app:lemmas} and the monotonicity of
Section~\ref{sec:monotonicity} --- every claim in this paper. No other
software, no network access, and no floating-point trust is involved.

\paragraph{Five-minute floating-point cross-check.} Independently of the
certificates, the public evaluator of Ref.~\cite{agarwal2026} can be
pointed at the shipped payloads: load either payload as 184 little-endian
binary64 values --- 92 complex amplitudes, the two rows of the rank-two
purification over the 46-dimensional $n = 45$ Dicke basis (also provided
in the repository's notebook-native format in the artifact) --- evaluate
the one-shot coherent information at the four endpoint values of $p$, and
compare against the certified enclosures of Fact~\ref{fact:signs}.
Floating-point agreement to several digits is expected and is \emph{not}
part of our proof --- it is a quick plausibility check that the two
codebases are looking at the same states and the same channel convention.

\section{Derivation of the block formula}
\label{app:derivation}

This appendix proves, from first principles, the block decomposition that
Section~\ref{sec:certified} states and the evaluator implements. Its only
citation is classical Schur--Weyl duality itself; everything
two-row-specific --- the Specht dimensions, the transfer matrix, the
coefficient extraction, the exact kernel with its normalization and index
conventions --- is derived here, so that the evaluator's arithmetic is
the literal implementation of Theorem~E below. Throughout,
$e_0 = |0\rangle$, $e_1 = |1\rangle$, logarithms are base two, and for a
positive (possibly subnormalized) matrix $A$ we write $\ent(A) =
-\Tr(A\log_2 A)$ with $0\log 0 := 0$.

\begin{elemma}[two-row Schur--Weyl decomposition]
\label{lem:schurweyl}
For $H_n = (\mathbb{C}^2)^{\otimes n}$,
$$H_n \cong \bigoplus_{f=\lceil n/2\rceil}^{n} U_{(f,n-f)} \otimes
S_{(f,n-f)},$$
and with $q = n - f$, $d = f - q = 2f - n$:
$$U_{(f,q)} \cong \det{}^{q} \otimes \mathrm{Sym}^{d}(\mathbb{C}^2),
\qquad \dim U_{(f,q)} = d + 1,$$
$$m_f := \dim S_{(f,q)} = \tbinom{n}{q} - \tbinom{n}{q-1} =
\tbinom{n+1}{f+1}\tfrac{d+1}{n+1}.$$
\end{elemma}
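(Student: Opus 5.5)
Classical Schur--Weyl duality is the only citation, as the appendix intro announces, and everything two-row-specific is derived from it in three steps. First, Schur--Weyl duality for $GL_2\times S_n$ acting on $H_n=(\mathbb{C}^2)^{\otimes n}$ gives $H_n\cong\bigoplus_\lambda U_\lambda\otimes S_\lambda$. Here $\lambda$ runs over partitions of $n$ with at most $\dim\mathbb{C}^2=2$ rows, $U_\lambda$ is the irreducible polynomial $GL_2$-module of highest weight $\lambda$, and $S_\lambda$ is the Specht module. Two-row partitions are exactly $(f,n-f)$ with $f\ge n-f$, i.e.\ $f=\lceil n/2\rceil,\dots,n$, which is the index range in the display.

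Second, I identify $U_{(f,q)}$ with $\det^{q}\otimes\mathrm{Sym}^{d}(\mathbb{C}^2)$. The module $\mathrm{Sym}^d(\mathbb{C}^2)$ is the irreducible of highest weight $(d,0)$: it is spanned by $e_0^{d-j}e_1^{j}$, the torus weights $(d-j,j)$ each occur once, and it is irreducible since its highest-weight space is one-dimensional and $\mathfrak{sl}_2$-lowering reaches every weight. Tensoring with $\det^{q}$ shifts every weight by $(q,q)$, which gives highest weight $(d+q,q)=(f,q)$. Irreducible polynomial modules are determined by highest weight, so the identification follows, and $\dim U_{(f,q)}=d+1$.

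Third, and this is the step I expect to need the most care, I compute $m_f$ by a character (weight-space) count instead of invoking the hook length formula. Let $W_q\subset H_n$ be the span of basis strings with exactly $q$ ones; it is the torus weight space of weight $(n-q,q)$, and $\dim W_q=\binom{n}{q}$. A summand $U_{(f',n-f')}$ contains weight $(n-q,q)$, with multiplicity one, exactly when $n-f'\le q\le f'$, i.e.\ $f'\ge\max(q,n-q)$. Take $q\le n/2$ and restrict to the weight spaces $W_q$. This gives
$$\binom{n}{q}=\sum_{q'=0}^{q} m_{n-q'},$$
and subtracting the same identity at $q-1$ yields $m_f=\binom{n}{q}-\binom{n}{q-1}$, with $\binom{n}{-1}:=0$. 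The point to watch is that the constraint $q\le n/2$ is exactly $f\ge\lceil n/2\rceil$, so the telescoping never leaves the valid range.

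Finally, the closed form is routine algebra. We have $\binom{n}{q}-\binom{n}{q-1}=\binom{n}{q}\bigl(1-\tfrac{q}{n-q+1}\bigr)=\binom{n}{q}\tfrac{d+1}{f+1}$, and $\binom{n}{q}\tfrac{1}{f+1}=\tfrac{n!}{q!\,(f+1)!}=\binom{n+1}{f+1}\tfrac{1}{n+1}$. Combining the two gives $m_f=\binom{n+1}{f+1}\tfrac{d+1}{n+1}$. As a sanity check, $\sum_f(d+1)m_f=2^n$.
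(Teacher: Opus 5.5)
Your proof is correct, and its main steps are the paper's: the same single appeal to Schur--Weyl duality, the same weight-space count for $m_f$, and the same algebra for the closed form. The count uses the $\binom{n}{q}$-dimensional space of strings with $q$ ones, which each $U_{(n-t,t)}$ with $t\le q$ meets exactly once, and then subtracts the identity at $q-1$. You are, if anything, more explicit than the paper about why $q\le n/2$ keeps every summand in range.

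The one step you do differently is the identification $U_{(f,q)}\cong\det^{q}\otimes\mathrm{Sym}^{d}(\mathbb{C}^2)$. You argue abstractly by highest weights. The paper instead exhibits the module inside $H_n$ as $\chi^{\otimes q}\otimes\mathrm{Sym}^d(\mathbb{C}^2)$, via the singlet identity $(T\otimes T)\chi=\det(T)\,\chi$ and the orthonormal vectors $u_j=\chi^{\otimes q}\otimes|D^d_j\rangle$ generated by the collective lowering operator.

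Your version is shorter and fully adequate for the isomorphism as stated. The paper's gives a manifestly isometric embedding with an explicit Dicke basis. That is the frame in which Lemma~\ref{lem:kernel} computes sector entries, and it makes the ``unitarily equivalent'' claim of Theorem~E immediate. Starting from an abstract isomorphism, you would need one more line invoking uniqueness of the $U(2)$-invariant inner product on an irreducible.

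One minor tightening: ``one-dimensional highest-weight space and lowering reaches every weight'' only shows that $\mathrm{Sym}^d(\mathbb{C}^2)$ is generated by its top vector. Irreducibility then tacitly uses complete reducibility. Alternatively, note as the paper does that raising also connects adjacent weights, so every nonzero submodule contains $e_0^{d}$.
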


\begin{proof}
Classical Schur--Weyl duality decomposes $H_n$ over partitions of $n$
with at most two rows, i.e.\ $\lambda = (f,q)$ with $f + q = n$,
$f \ge q$. To identify $U_{(f,q)}$, let $\chi = (e_0\otimes e_1 -
e_1\otimes e_0)/\sqrt{2}$; for every $2\times 2$ matrix $T$,
$(T\otimes T)\chi = \det(T)\,\chi$. Hence the subspace $W_q =
\chi^{\otimes q} \otimes \mathrm{Sym}^d(\mathbb{C}^2) \subset H_n$
carries the representation $T \mapsto \det(T)^q\,\mathrm{Sym}^d(T)$. With
the collective lowering operator $L = \sum_{t=1}^n
|1\rangle\langle 0|_t$ (which kills $\chi$), the vectors
$$u_j = \sqrt{\tfrac{(d-j)!}{j!\,d!}}\; L^j\big(\chi^{\otimes q} \otimes
e_0^{\otimes d}\big) = \chi^{\otimes q} \otimes |D_j^d\rangle,$$
$0 \le j \le d$, form a basis of $W_q$ on which $\mathrm{diag}(t_0,t_1)$ acts with the
$d+1$ distinct weights $t_0^{\,q+d-j} t_1^{\,q+j}$, and raising/lowering
connects every adjacent pair; so $W_q$ is irreducible with highest weight
$(f,q)$, i.e.\ $W_q \cong U_{(f,q)}$, of dimension $d+1$.

For the multiplicity: the weight subspace of $H_n$ with exactly $q$
copies of $e_1$ has dimension $\binom{n}{q}$, and inside each
$U_{(n-t,t)}$ with $0 \le t \le q$ that weight occurs exactly once (the
vector with $q - t$ lowerings). Hence $\binom{n}{q} = \sum_{t=0}^{q}
\dim S_{(n-t,t)}$, and subtracting the same identity at $q-1$ gives
$\dim S_{(f,q)} = \binom{n}{q} - \binom{n}{q-1}$. (Combinatorially this
is the two-row ballot count: encoding a standard filling of $(f,q)$ by
its row word, the reflection that swaps symbols through the first
violating prefix is a bijection between bad words and words with $q-1$
second-row symbols.) The closed form follows from $\binom{n}{q} -
\binom{n}{q-1} = \binom{n}{q}\frac{n-2q+1}{n-q+1} =
\binom{n+1}{f+1}\frac{d+1}{n+1}$.
\end{proof}

\begin{elemma}[covariance and multiplicity-weighted entropy]
\label{lem:covariance}
Let $\Omega$ be an operator on $\mathbb{C}^2_{\mathrm{ref}} \otimes H_n$
that commutes with $I_{\mathrm{ref}} \otimes P_\pi$ for every permutation
$\pi \in S_n$. Then, in the decomposition of
Lemma~\ref{lem:schurweyl},
$$\Omega \cong \bigoplus_f B_f \otimes I_{m_f}$$
with $B_f$ acting on $\mathbb{C}^2_{\mathrm{ref}} \otimes U_{(f,n-f)}$;
if $\Omega$ is a density matrix then $\sum_f m_f\,\Tr B_f = 1$ and
$S(\Omega) = \sum_f m_f\,\ent(B_f)$; and $\Tr_{\mathrm{ref}}\,\Omega
\cong \bigoplus_f (\Tr_{\mathrm{ref}} B_f) \otimes I_{m_f}$, so
$S(\Tr_{\mathrm{ref}}\Omega) = \sum_f
m_f\,\ent(\Tr_{\mathrm{ref}} B_f)$.
\end{elemma}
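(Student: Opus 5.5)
The plan is to apply Schur's lemma to the commutant of the $S_n$ action, combined with the isotypic decomposition of Lemma~\ref{lem:schurweyl}. First, tensor the decomposition with the reference space:
$$\mathbb{C}^2_{\mathrm{ref}}\otimes H_n \cong \bigoplus_f \big(\mathbb{C}^2_{\mathrm{ref}}\otimes U_{(f,n-f)}\big)\otimes S_{(f,n-f)}.$$
Here $I_{\mathrm{ref}}\otimes P_\pi$ acts as $I\otimes I\otimes \sigma_f(\pi)$ on each summand. The $S_n$-irreps $S_{(f,n-f)}$ are pairwise inequivalent for distinct $f$, and each is irreducible; this is part of classical Schur--Weyl duality, which is the only cited input.

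\textbf{Block structure.} Write $\Omega$ as a block operator $\Omega_{f'f}$ from the $f$-summand to the $f'$-summand and expand it in matrix units $|a\rangle\langle b|$ of the first two factors. Each coefficient is then an intertwiner $S_{(f)}\to S_{(f')}$. By Schur's lemma this coefficient vanishes for $f\neq f'$ and is a scalar multiple of $I_{m_f}$ for $f=f'$. Collecting the scalars gives $\Omega\cong\bigoplus_f B_f\otimes I_{m_f}$ with $B_f$ acting on $\mathbb{C}^2_{\mathrm{ref}}\otimes U_{(f,n-f)}$. I expect this to be the main step needing care, namely justifying inequivalence and irreducibility. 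Both are standard content of Schur--Weyl duality: the commutant of $\{T^{\otimes n}\}$ is generated by the permutations, and the $U_\lambda$ are pairwise inequivalent. So I would cite them rather than reprove them.

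\textbf{Trace and entropy.} These are now mechanical. Since $\Tr(B\otimes I_m)=m\Tr B$, we get $\sum_f m_f\Tr B_f=1$. If $\Omega\ge 0$, each $B_f\ge 0$. The spectrum of $B_f\otimes I_{m_f}$ is that of $B_f$ with each eigenvalue repeated $m_f$ times, so
$$-\Tr\big((B\otimes I_m)\log(B\otimes I_m)\big)=m\,\ent(B).$$
Entropy is additive over direct sums and invariant under the unitary equivalence, which gives $S(\Omega)=\sum_f m_f\ent(B_f)$. The convention $0\log 0=0$ handles subnormalized or singular blocks.

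\textbf{Partial trace.} The unitary implementing the equivalence has the form $I_{\mathrm{ref}}\otimes W$, with $W$ acting on $H_n$ alone, because the decomposition is of $H_n$ tensored with $\mathbb{C}^2_{\mathrm{ref}}$. Hence $\Tr_{\mathrm{ref}}$ commutes with the conjugation by this unitary. It also passes through the direct sum and the $\otimes I_{m_f}$ factor, giving
$$\Tr_{\mathrm{ref}}\Omega\cong\bigoplus_f(\Tr_{\mathrm{ref}}B_f)\otimes I_{m_f}.$$
The same spectral argument as above then yields the stated entropy formula for $\Tr_{\mathrm{ref}}\Omega$.
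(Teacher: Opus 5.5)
Your proposal is correct and follows essentially the same route as the paper: Schur--Weyl duality puts $I_{\mathrm{ref}}\otimes P_\pi$ in the form $\bigoplus_f I\otimes\pi_f$ with the $\pi_f$ irreducible and pairwise inequivalent. Schur's lemma then removes the cross-sector blocks and makes the Specht component scalar, and the $m_f$-fold repetition of the spectrum gives the trace and entropy identities; your remark that the equivalence has the form $I_{\mathrm{ref}}\otimes W$ makes the partial-trace step more explicit than the paper's ``acts inside $B_f$ only.'' The paper's proof also checks that the actual joint output satisfies the commutation hypothesis, since $\Dp^{\otimes n}$ commutes with every $P_\pi$ and Dicke inputs are permutation-fixed, but the lemma as stated takes that hypothesis as given, so your proof is complete without it.
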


\begin{proof}
Because the same channel acts on every factor, $\Dp^{\otimes n}$ commutes
with every $P_\pi$; Dicke vectors are permutation-fixed, so the joint
input and hence the joint output are $S_n$-invariant. Under Schur--Weyl,
$S_n$ acts as $\bigoplus_f I_{\mathbb{C}^2_{\mathrm{ref}} \otimes U_f}
\otimes \pi_f$ with the $\pi_f$ irreducible and pairwise inequivalent; an
invariant operator has no cross-sector blocks, and within a sector
Schur's lemma forces the Specht component to be scalar. The spectrum of
$B_f \otimes I_{m_f}$ is that of $B_f$ with every eigenvalue repeated
$m_f$ times, which yields the trace and entropy identities; the partial
trace over the reference acts inside $B_f$ only.
\end{proof}

\begin{elemma}[transfer matrix]
\label{lem:transfer}
With formal variables $x, y$ and $|x\rangle = e_0 + x e_1$, $\langle y| =
\langle 0| + y\langle 1|$:
\begin{multline*}
T(x,y) := \Dp(|x\rangle\langle y|)\\ = \begin{pmatrix}
1 - 2p + 2pxy & (1-4p)\,y \\ (1-4p)\,x & 2p + (1-2p)\,xy
\end{pmatrix},
\end{multline*}
$$\det T(x,y) = 2p(1-2p)(1+xy)^2.$$
\end{elemma}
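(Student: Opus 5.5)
The plan is to compute $\Dp(|x\rangle\langle y|)$ directly, using the affine form of Proposition~\ref{prop:affine}, and then read off the determinant. Here $|x\rangle\langle y|$ is shorthand for the formal operator $(e_0 + x e_1)(\langle 0| + y\langle 1|)$. Both sides of each claimed identity are polynomial in $x,y$, and $\Dp$ is linear, so it suffices to treat $x,y$ as indeterminates and expand.

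\emph{Step 1 (the matrix).} Write
$$T_0 := |x\rangle\langle y| = \begin{pmatrix} 1 & y \\ x & xy\end{pmatrix}, \qquad \Tr T_0 = 1 + xy.$$
Proposition~\ref{prop:affine} holds for every $2\times2$ matrix. Since both sides are linear, it extends entrywise to matrices with polynomial entries. It gives $\Dp(T_0) = (1-4p)T_0 + 2p(1+xy)I$. The entries then follow:
\begin{itemize}
\item the $(0,0)$ entry is $(1-4p) + 2p + 2pxy = 1-2p+2pxy$;
\item the off-diagonal entries are $(1-4p)y$ and $(1-4p)x$, with no trace contribution;
\item the $(1,1)$ entry is $(1-4p)xy + 2p + 2pxy = 2p + (1-2p)xy$.
\end{itemize}
These match the displayed $T(x,y)$.

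\emph{Step 2 (the determinant).} Expanding directly,
$$\det T = (1-2p+2pxy)(2p+(1-2p)xy) - (1-4p)^2 xy .$$
Put $s = xy$ and $a = 2p$, so $1-4p = 1-2a$. The product becomes
$$a(1-a) + \big[(1-a)^2 + a^2\big]s + a(1-a)s^2 .$$
Subtracting $(1-2a)^2 s = \big[(1-a)^2 + a^2 - 2a(1-a)\big]s$ leaves $a(1-a)(1 + 2s + s^2) = a(1-a)(1+s)^2$. This is $2p(1-2p)(1+xy)^2$, as claimed.

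As a sanity check, the determinant of the affine form is a quadratic in $s$ whose discriminant must vanish. The factored form is also consistent with $\det$ vanishing at $p=0$, since $|x\rangle\langle y|$ has rank one. Neither check is load-bearing.

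There is no real obstacle here. The only point needing care is the bookkeeping in Step~2: grouping terms via the substitution $a=2p$ makes the perfect square $(1+s)^2$ visible without error-prone expansion. A brief remark also justifies applying Proposition~\ref{prop:affine} with formal entries, namely that both sides are linear in $T$.
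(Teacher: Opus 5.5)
Your proposal is correct and follows the paper's proof essentially step for step: apply Proposition~\ref{prop:affine} to $|x\rangle\langle y|$, whose trace is $1+xy$, then verify the determinant by collecting coefficients in $xy$. The only difference is cosmetic: the paper names $\eta = 1-4p$ where you substitute $a = 2p$.
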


\begin{proof}
Proposition~\ref{prop:affine} gives $\Dp(A) = (1-4p)A + 2p\,\Tr(A)\,I$;
applying it to $|x\rangle\langle y| = \big(\begin{smallmatrix} 1 & y \\
x & xy \end{smallmatrix}\big)$ with trace $1 + xy$ yields the matrix. For
the determinant put $z = xy$ and $\eta = 1 - 4p$: the constant and $z^2$
coefficients of $(1-2p+2pz)(2p+(1-2p)z) - \eta^2 z$ are both $2p(1-2p)$
and the $z$ coefficient is $(1-2p)^2 + 4p^2 - (1-4p)^2 = 4p(1-2p)$.
\end{proof}

\begin{elemma}[exact kernel by coefficient extraction]
\label{lem:kernel}
Fix a sector $f$; let $q = n - f$, $d = 2f - n$, and for final indices
$0 \le i, j \le d$ set $u = d - j$ and $v = d - i$ (the evaluator's
\texttt{raw\_row} and \texttt{raw\_column}). For any $2\times 2$ matrix
$T = \big(\begin{smallmatrix} A & B \\ C & D \end{smallmatrix}\big)$, the
normalized Dicke-basis matrix element of $\mathrm{Sym}^d(T)$ is
\begin{align*}
[\mathrm{Sym}^d(T)]_{ij} &= \sqrt{\frac{u!\,(d-u)!}{v!\,(d-v)!}}\;
P_{uv}(T),\\
P_{uv}(T) &= \sum_{s} \tbinom{v}{s}\!\tbinom{d-v}{u-s}
A^s\! B^{v-s} C^{u-s} D^{d-u-v+s},
\end{align*}
the sum over $\max(0, u+v-d) \le s \le \min(u,v)$. With $a_0 = 1-2p$,
$a_1 = 2p$, $d_0 = 2p$, $d_1 = 1-2p$, $\eta = 1-4p$, $\Delta =
2p(1-2p)$, define $R^{kl}_{ij} = 0$ unless $k - l = u - v$, and
otherwise, writing $b = v - s$, $c = u - s$, $e = d - u - v + s$ per
summand,
\begin{multline*}
R^{kl}_{ij} = \sum_s \binom{v}{s}\binom{d-v}{u-s}\,\eta^{\,b+c}\\
\times \sum_{\alpha=0}^{s}\binom{s}{\alpha} a_0^{s-\alpha} a_1^{\alpha}
\sum_{\beta=0}^{e}\binom{e}{\beta} d_0^{e-\beta} d_1^{\beta}
\binom{2q}{h}\,\Delta^{q},
\end{multline*}
where $h = k - (\alpha + \beta + c)$ and a summand is included only when
$0 \le h \le 2q$ and $l = \alpha + \beta + h + b$. Then the $f$-sector
block of $\Dp^{\otimes n}(|D^n_k\rangle\langle D^n_l|)$ has entry
$$K_f(i,j;k,l) = R^{kl}_{ij}\,
\sqrt{\frac{u!\,(d-u)!}{v!\,(d-v)!\,\binom{n}{k}\binom{n}{l}}}.$$
\end{elemma}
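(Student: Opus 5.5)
The plan is to compute the sector block of $\Dp^{\otimes n}(|D^n_k\rangle\langle D^n_l|)$ by a generating function in two formal variables and then read off coefficients. First I would write the normalized Dicke vectors as coefficients of a product state:
$$|x\rangle^{\otimes n} = \sum_k x^k \tbinom{n}{k}^{1/2}|D^n_k\rangle,$$
and similarly for $\langle y|^{\otimes n}$. This gives
$$|D^n_k\rangle\langle D^n_l| = \tbinom{n}{k}^{-1/2}\tbinom{n}{l}^{-1/2}\,[x^k y^l]\,\big(|x\rangle\langle y|\big)^{\otimes n}.$$
By linearity and Lemma~\ref{lem:transfer}, applying $\Dp^{\otimes n}$ turns the right-hand side into $[x^ky^l]\,T(x,y)^{\otimes n}$. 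This already produces the factor $\binom{n}{k}^{-1/2}\binom{n}{l}^{-1/2}$ in $K_f$.

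Second, I would identify the $f$-sector of $T^{\otimes n}$. On the subspace $W_q=\chi^{\otimes q}\otimes\mathrm{Sym}^d(\mathbb{C}^2)$ of Lemma~\ref{lem:schurweyl}, $T^{\otimes n}$ acts as $\det(T)^q\,\mathrm{Sym}^d(T)$. The identity $(T\otimes T)\chi=\det(T)\chi$ holds for every $2\times2$ $T$, not only invertible ones, and $T^{\otimes d}$ preserves the symmetric subspace, so this is immediate. Since $T^{\otimes n}$ commutes with permutations, Lemma~\ref{lem:covariance} shows that its sector-$f$ block in the basis $u_j=\chi^{\otimes q}\otimes|D^d_j\rangle$ is exactly $\det(T)^q\,\mathrm{Sym}^d(T)$ tensored with the identity on the Specht factor. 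Because both sides are polynomial in $x,y$, coefficient extraction commutes with restriction to the sector. Lemma~\ref{lem:transfer} then gives
$$\det(T)^q=\Delta^q(1+xy)^{2q}=\Delta^q\sum_h\tbinom{2q}{h}(xy)^h,$$
which is the source of the $\binom{2q}{h}\Delta^q$ factor in $R^{kl}_{ij}$.

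Third, I would prove the $\mathrm{Sym}^d$ matrix-element formula. Realize $\mathrm{Sym}^d$ on homogeneous polynomials with $z_0^{d-m}z_1^m\leftrightarrow\binom{d}{m}^{1/2}|D^d_m\rangle$. The image of a basis monomial is the product of $(Az_0+Cz_1)$ raised to one exponent and $(Bz_0+Dz_1)$ raised to the other. Expanding both binomially, with $s$ counting the factors that contribute $A$, yields $P_{uv}$ with the stated summation range. Converting from monomials to normalized Dicke vectors produces the ratio of factorials under the square root. The reversal $u=d-j$, $v=d-i$ is precisely what makes the exponents come out as $A^sB^{v-s}C^{u-s}D^{d-u-v+s}$; I would fix it by checking the case $d=1$ against $T$ itself.

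Finally, I would substitute the entries of $T(x,y)$:
$$A=a_0+a_1xy,\qquad B=\eta y,\qquad C=\eta x,\qquad D=d_0+d_1xy.$$
I would expand $A^s$ over $\alpha$ and $D^e$ over $\beta$ binomially and multiply by the $(xy)^h$ series. The $x$-degree of a term is $\alpha+\beta+c+h$ and its $y$-degree is $\alpha+\beta+b+h$. Extracting $[x^ky^l]$ therefore forces $h=k-(\alpha+\beta+c)$ and $l=\alpha+\beta+h+b$. Subtracting these two constraints gives $k-l=c-b=u-v$, which is the vanishing rule for $R^{kl}_{ij}$.

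I expect the main obstacle to be the index and normalization bookkeeping rather than any conceptual step. Two points need care. One is whether rows and columns are reversed, i.e.\ the correspondence between the evaluator's $i,j$ and the powers of $z_1$. The other is that the basis $u_j$ of Lemma~\ref{lem:schurweyl}, built from $e_0^{\otimes d}$ by lowering, matches the monomial basis with the same phase, so that no stray signs from $\chi^{\otimes q}$ appear. I would settle both by direct verification on $n=1$ and $n=2$ with $q=1$.
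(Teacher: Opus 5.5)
Your architecture is the same as the paper's proof, and those parts are fine: the generating function $(|x\rangle\langle y|)^{\otimes n}$ with the $\binom{n}{k}^{-1/2}\binom{n}{l}^{-1/2}$ normalization, the sector of $T(x,y)^{\otimes n}$ acting as $\det(T)^q\,\mathrm{Sym}^d(T)$ on $W_q$, $\det(T)^q=\Delta^q\sum_h\binom{2q}{h}(xy)^h$, the binomial expansions of $A^s$ and $D^e$, and exponent matching for the two inclusion conditions and the rule $k-l=u-v$.

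The step that fails as written is your third one, the $\mathrm{Sym}^d$ matrix element, which is exactly where the lemma's square-root prefactor comes from. Take ordinary homogeneous polynomials with the substitution $z_0\mapsto Az_0+Cz_1$, $z_1\mapsto Bz_0+Dz_1$, which is what your ``image of a basis monomial'' describes. The map $e_{a_1}\otimes\cdots\otimes e_{a_d}\mapsto z_{a_1}\cdots z_{a_d}$ intertwines $T^{\otimes d}$ with this substitution and sends $|D^d_m\rangle$ to $\binom{d}{m}^{1/2}z_0^{d-m}z_1^m$. So the equivariant dictionary is $z_0^{d-m}z_1^m\leftrightarrow\binom{d}{m}^{-1/2}|D^d_m\rangle$, and your exponent $+1/2$ is inverted.

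There is a second problem. Expand $(Az_0+Cz_1)^{u}(Bz_0+Dz_1)^{d-u}$, the image of the monomial attached to $|D^d_j\rangle$, at $z_0^{v}z_1^{d-v}$ with $s$ factors $A$. This gives the coefficients $\binom{u}{s}\binom{d-u}{v-s}$, not $\binom{v}{s}\binom{d-v}{u-s}$. The two differ by the factor $u!\,(d-u)!/(v!\,(d-v)!)$, so the expansion does not yield $P_{uv}$.

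The correct bookkeeping is $[\mathrm{Sym}^d(T)]_{ij}=\sqrt{\binom{d}{j}/\binom{d}{i}}\cdot\frac{u!(d-u)!}{v!(d-v)!}\,P_{uv}$, which is the claimed formula. With your dictionary you get the factorial ratio to the power $3/2$ instead. Concretely, $(T\otimes T)|00\rangle=A^2|00\rangle+\sqrt2\,AC\,|D^2_1\rangle+C^2|11\rangle$, whereas your recipe gives $2\sqrt2\,AC$ for the $(1,0)$ entry.

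Your planned checks cannot catch this. The cases $d=1$, $n=1$, and $n=2$ with $q=1$ (so $d=0$) all have trivial binomial coefficients. The first discriminating case is $d=2$, e.g.\ $n=2$, $q=0$.

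The paper avoids the issue by expanding along the rows of $T$ rather than its columns. From $(e_0+ze_1)^{\otimes d}=\sum_r\sqrt{\binom{d}{r}}\,z^r|D^d_r\rangle$ and $T(e_0+ze_1)=(A+Bz)e_0+(C+Dz)e_1$ one gets $\sqrt{\binom{d}{j}}\,[\mathrm{Sym}^d(T)]_{ij}=\sqrt{\binom{d}{i}}\,[z^j](A+Bz)^{v}(C+Dz)^{d-v}$. Choosing $s$ copies of $A$ from the first factor produces $\binom{v}{s}\binom{d-v}{u-s}$ directly, with prefactor $\sqrt{\binom{d}{i}/\binom{d}{j}}=\sqrt{u!(d-u)!/(v!(d-v)!)}$.

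Your text would be literally right in the divided-power model. There $z_0^{[d-m]}z_1^{[m]}$ is the unnormalized sum of all $\binom{d}{m}$ strings, $(x+y)^{[k]}=\sum_a x^{[a]}y^{[k-a]}$ carries no binomials, and $\binom{v}{s}\binom{d-v}{u-s}$ comes from the product rule $z^{[a]}z^{[b]}=\binom{a+b}{a}z^{[a+b]}$. But that is not ordinary binomial expansion. With either repair, the rest of your argument goes through unchanged.
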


\begin{proof}
The coherent-vector identity $(e_0 + z e_1)^{\otimes d} = \sum_r
\sqrt{\binom{d}{r}}\, z^r |D_r^d\rangle$ gives
$$\sqrt{\tbinom{d}{j}}[\mathrm{Sym}^d(T)]_{ij} =
\sqrt{\tbinom{d}{i}}[z^j](A\! +\! Bz)^{d-i}(C\! +\! Dz)^{i}.$$
Choosing $s$ copies of $A$ from the first factor makes the powers of
$A, B, C, D$ equal $s$, $d-i-s$, $d-j-s$, $i+j-d+s$; substituting
$u = d-j$, $v = d-i$ yields $P_{uv}$ and, since
$\sqrt{\tbinom{d}{i}/\tbinom{d}{j}} = \sqrt{u!(d-u)!/(v!(d-v)!)}$, the
normalization --- including the evaluator's reversed-and-transposed raw
indexing. By Lemma~\ref{lem:schurweyl} the $f$-sector of
$T(x,y)^{\otimes n}$ is $\det(T(x,y))^q\,\mathrm{Sym}^d(T(x,y))$.
Expanding each factor by Lemma~\ref{lem:transfer} --- $A^s = \sum_\alpha
\binom{s}{\alpha} a_0^{s-\alpha} a_1^{\alpha} (xy)^{\alpha}$, $D^e =
\sum_\beta \binom{e}{\beta} d_0^{e-\beta} d_1^{\beta} (xy)^{\beta}$,
$B^b C^c = \eta^{b+c} y^b x^c$, and $\det(T)^q = \Delta^q \sum_h
\binom{2q}{h} x^h y^h$ --- produces monomials
$x^{\alpha+\beta+c+h}\,y^{\alpha+\beta+b+h}$, which contribute to
$x^k y^l$ exactly under the two stated conditions; comparing exponents
also forces the selection rule $k - l = c - b = u - v$. This is
term-for-term the evaluator's rational coefficient routine. Finally, from
$|x\rangle^{\otimes n}\langle y|^{\otimes n} = \sum_{k,l}
\sqrt{\binom{n}{k}\binom{n}{l}}\, x^k y^l\,
|D^n_k\rangle\langle D^n_l|$ and linearity of the channel, the sector
block of $\Dp^{\otimes n}(|D^n_k\rangle\langle D^n_l|)$ is the $x^k y^l$
coefficient divided by $\sqrt{\binom{n}{k}\binom{n}{l}}$, giving $K_f$.
(Coefficient extraction here is exact polynomial algebra: both variables
have degree at most $n = d + 2q$, so the roots-of-unity contraction
$[x^k y^l] F = (n+1)^{-2}\sum_{r,t} \zeta^{-rk}\zeta^{tl}
F(\zeta^r, \zeta^{-t})$ with $\zeta^{n+1} = 1$ recovers each coefficient
with no interpolation or approximation.)
\end{proof}

\begin{elemma}[reference side]
\label{lem:reference}
Let $|\Psi\rangle = N^{-1/2}\sum_{r=0}^1 |r\rangle_{\mathrm{ref}}
|v_r\rangle$ with $|v_r\rangle = \sum_k \psi[r][k]\,|D^n_k\rangle$ and
$N = \sum_{r,k} |\psi[r][k]|^2 > 0$. In sector $f$ the joint block on
$\mathbb{C}^2_{\mathrm{ref}} \otimes U_f$ has dimension $2(d+1)$ and
entries
$$[B_f]_{(r,i),(c,j)} = \frac{1}{N}\sum_{k,l=0}^{n} \psi[r][k]\,
\overline{\psi[c][l]}\; K_f(i,j;k,l),$$
and with the flattened index $r(d+1)+i$,
$[\Tr_{\mathrm{ref}} B_f]_{ij} = \sum_r [B_f]_{(r,i),(r,j)}$.
\end{elemma}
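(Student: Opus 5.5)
The plan is to compute the joint output directly from the definition and then read off its sector decomposition with the results already in hand. I would expand the input projector as
$$|\Psi\rangle\langle\Psi| = N^{-1}\sum_{r,c}|r\rangle\langle c|_{\mathrm{ref}}\otimes\sum_{k,l}\psi[r][k]\,\overline{\psi[c][l]}\,|D^n_k\rangle\langle D^n_l|.$$
The normalization $N=\sum_{r,k}|\psi[r][k]|^2$ makes this a unit-trace projector, because Dicke vectors are orthonormal. Applying $\id_{\mathrm{ref}}\otimes\Dp^{\otimes n}$ is linear, and it acts only on the system factor. The joint output is therefore
$$N^{-1}\sum_{r,c,k,l}\psi[r][k]\,\overline{\psi[c][l]}\;|r\rangle\langle c|\otimes\Dp^{\otimes n}\big(|D^n_k\rangle\langle D^n_l|\big).$$

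Next I would invoke Lemma~\ref{lem:covariance}: the joint output commutes with $I_{\mathrm{ref}}\otimes P_\pi$, so it has the form $\bigoplus_f B_f\otimes I_{m_f}$ with $B_f$ acting on $\mathbb{C}^2_{\mathrm{ref}}\otimes U_{(f,n-f)}$. This space has dimension $2(d+1)$ by Lemma~\ref{lem:schurweyl}. The reference factor is untouched by the Schur--Weyl decomposition. Hence the $f$-sector block of each summand is $|r\rangle\langle c|$ tensored with the $f$-sector block of $\Dp^{\otimes n}(|D^n_k\rangle\langle D^n_l|)$, and Lemma~\ref{lem:kernel} gives that block entrywise as $K_f(i,j;k,l)$ in the basis $u_j=\chi^{\otimes q}\otimes|D^d_j\rangle$. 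Taking the sector block commutes with the finite linear combination. Reading off the $(r,i),(c,j)$ entry then yields the displayed formula for $[B_f]$.

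For the partial trace, I would use that $\Tr_{\mathrm{ref}}$ acts inside each $B_f$ (as stated in Lemma~\ref{lem:covariance}). In the product basis $|r\rangle\otimes u_i$, summing the diagonal reference index gives $[\Tr_{\mathrm{ref}}B_f]_{ij}=\sum_r[B_f]_{(r,i),(r,j)}$. With the flattened index $r(d+1)+i$, this is the sum of the two diagonal $(d+1)\times(d+1)$ sub-blocks.

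The main obstacle I expect is bookkeeping consistency rather than mathematics. The basis in which Lemma~\ref{lem:covariance} writes $B_f$ must be the same basis $u_i$ (with multiplicity vector fixed) in which Lemma~\ref{lem:kernel} computes $K_f$, including the reversed indices $u=d-j$, $v=d-i$. I would settle this by fixing, once and for all, the unitary equivalence of Lemma~\ref{lem:schurweyl} to send $U_f\otimes s$ to the span of $\chi^{\otimes q}\otimes|D^d_\cdot\rangle$ composed with permutations. The $f$-block of $\mathrm{Sym}$-covariant operators is then computed in the $u_j$ frame. Permutation invariance ensures that every multiplicity copy yields the same $B_f$, so evaluating on the representative copy $W_q$ suffices.
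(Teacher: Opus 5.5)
Your proposal is correct and follows essentially the paper's proof: expand $|\Psi\rangle\langle\Psi|$ over $|r\rangle\langle c|\otimes|D^n_k\rangle\langle D^n_l|$, let the identity channel fix the reference factor, take each sector entry from Lemma~\ref{lem:kernel} and assemble by linearity, then get the partial trace by keeping $r=c$ and summing. Your explicit appeal to Lemma~\ref{lem:covariance} and your care in fixing the representative copy $\chi^{\otimes q}\otimes|D^d_j\rangle$ spell out a basis-consistency point the paper leaves implicit, but they do not change the argument.
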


\begin{proof}
Expand $|\Psi\rangle\langle\Psi| = N^{-1}\sum_{r,c,k,l}
\psi[r][k]\overline{\psi[c][l]}\;|r\rangle\langle c| \otimes
|D^n_k\rangle\langle D^n_l|$; the identity channel preserves
$|r\rangle\langle c|$, Lemma~\ref{lem:kernel} supplies the
channel-sector entry, and linearity assembles the block. The partial
trace keeps $r = c$ and sums.
\end{proof}

\begin{etheorem}[exact block formula]
Under the hypotheses of Lemma~\ref{lem:reference}, $\Omega =
(\id_{\mathrm{ref}} \otimes
\Dp^{\otimes n})(|\Psi\rangle\langle\Psi|)$ is unitarily equivalent to
$\bigoplus_f B_f \otimes I_{m_f}$ over $f = \lceil n/2\rceil,\dots,n$,
with $d = 2f - n$, $m_f = \binom{n+1}{f+1}(d+1)/(n+1)$, and
\begin{multline*}
[B_f]_{(r,i),(c,j)} = \frac{1}{N}\sum_{k,l} \psi[r][k]\,
\overline{\psi[c][l]}\; R^{kl}_{ij}\\
\times\sqrt{\frac{(d-j)!\,j!}{(d-i)!\,i!\,\binom{n}{k}\binom{n}{l}}}.
\end{multline*}
Consequently $\sum_f m_f \Tr B_f = 1$ and
$$I_c = \sum_f m_f\big[\ent(\Tr_{\mathrm{ref}} B_f) - \ent(B_f)\big].$$
\end{etheorem}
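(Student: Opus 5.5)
The plan is to assemble Theorem~E from Lemmas~\ref{lem:schurweyl}--\ref{lem:reference}, whose pieces are already in place. The only work left is to check that the indices and normalizations line up.

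First, I will verify the hypothesis of Lemma~\ref{lem:covariance}: $\Omega$ commutes with $I_{\mathrm{ref}}\otimes P_\pi$ for every $\pi\in S_n$. The input $|\Psi\rangle$ is a combination of $|r\rangle\otimes|D^n_k\rangle$, and Dicke vectors are permutation-fixed. Hence $(I\otimes P_\pi)|\Psi\rangle=|\Psi\rangle$, and the projector $|\Psi\rangle\langle\Psi|$ commutes with $I\otimes P_\pi$. Since $\Dp^{\otimes n}$ is a tensor power of one channel, $P_\pi\Dp^{\otimes n}(X)P_\pi^* = \Dp^{\otimes n}(P_\pi X P_\pi^*)$, so the invariance passes to $\Omega$. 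Lemma~\ref{lem:covariance} then yields the unitary equivalence $\Omega\cong\bigoplus_f B_f\otimes I_{m_f}$ with $B_f$ on $\mathbb{C}^2_{\mathrm{ref}}\otimes U_{(f,n-f)}$. It also gives the range $f=\lceil n/2\rceil,\dots,n$ and $m_f$ in the stated closed form from Lemma~\ref{lem:schurweyl}.

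Second, I will identify $B_f$ entrywise. I fix the basis $u_j=\chi^{\otimes q}\otimes|D^d_j\rangle$ of the chosen copy of $U_{(f,q)}$ from Lemma~\ref{lem:schurweyl}. Lemma~\ref{lem:reference}, with the kernel $K_f$ of Lemma~\ref{lem:kernel}, gives
\[
[B_f]_{(r,i),(c,j)} = N^{-1}\sum_{k,l}\psi[r][k]\overline{\psi[c][l]}\,K_f(i,j;k,l).
\]
It then remains to rewrite the prefactor. Substituting $u=d-j$ and $v=d-i$ gives
\[
\frac{u!\,(d-u)!}{v!\,(d-v)!}=\frac{(d-j)!\,j!}{(d-i)!\,i!},
\]
which is exactly the square root appearing in Theorem~E.

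This step is where I expect the main obstacle: making sure the $f$-sector block extracted in Lemma~\ref{lem:kernel} is the compression of $\Omega$ to one fixed copy $U_f\otimes e$ of the isotypic component. It must not be a trace over the Specht factor, which would carry an extra factor $m_f$. I would argue this as follows. On $\det^q\otimes\mathrm{Sym}^d$ the operator $T^{\otimes n}$ acts by $\det(T)^q\,\mathrm{Sym}^d(T)$ on each copy, and both sides of the kernel identity are polynomial in $(x,y)$. By linearity, $\Dp^{\otimes n}(|D_k\rangle\langle D_l|)$ is therefore the $x^ky^l$ coefficient divided by $\sqrt{\binom nk\binom nl}$. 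Its matrix elements between $u_i$ and $u_j$ are thus exactly $K_f$, with no multiplicity factor. The Schur-lemma form $B_f\otimes I_{m_f}$ then shows that any copy gives the same $B_f$. I would also check that the reversed indexing $u=d-j$, $v=d-i$ is consistent with the convention $u_j\propto L^j(\cdots)$, so that $i,j$ index $D^d_i,D^d_j$ as in Lemma~\ref{lem:kernel}.

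Finally, the consequences follow directly. Since $\Omega$ is a density matrix, Lemma~\ref{lem:covariance} gives $\sum_f m_f\Tr B_f=1$ and $S(\Omega)=\sum_f m_f\,\ent(B_f)$. The output $\Dp^{\otimes n}(\rho)$ is $\Tr_{\mathrm{ref}}\Omega\cong\bigoplus_f(\Tr_{\mathrm{ref}}B_f)\otimes I_{m_f}$, so $S(\Dp^{\otimes n}(\rho))=\sum_f m_f\,\ent(\Tr_{\mathrm{ref}}B_f)$. For a pure purification, $I_c=S(\Tr_{\mathrm{ref}}\Omega)-S(\Omega)$, and this gives the displayed formula after subtracting termwise. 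The purification here has a two-dimensional reference, which suffices for a rank-two input.
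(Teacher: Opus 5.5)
Your proposal is correct and takes essentially the paper's route: Theorem~E is just the assembly of Lemma~\ref{lem:kernel} for the channel-sector entries, Lemma~\ref{lem:reference} for the reference blocks, and Lemma~\ref{lem:covariance} for the direct sum, multiplicities, normalization and entropy identities, with the substitution $u=d-j$, $v=d-i$ converting the kernel's square root into the displayed one. Your additional checks make explicit what the paper leaves implicit: the $S_n$-invariance of $\Omega$ (which the paper places inside the proof of Lemma~\ref{lem:covariance}), and the fact that the sector block is the compression to a single copy $U_f\otimes e$ rather than a trace over the Specht factor.
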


\begin{proof}
Lemma~\ref{lem:kernel} gives every channel-sector matrix element,
Lemma~\ref{lem:reference} assembles the reference blocks, and
Lemma~\ref{lem:covariance} supplies the direct sum, the multiplicities,
the normalization, and the entropy identities.
\end{proof}

The evaluator is this theorem verbatim: its exact-state routine supplies
$\psi$ and $N$ from the payload dyadics; its rational-coefficient routine
is $R^{kl}_{ij}$; its entry kernel multiplies by the displayed square
root (enclosing the generally irrational root in a directed rational
interval); its reference block uses the flattened reference-first index;
its partial trace performs the Lemma~\ref{lem:reference} sum; and its
assembly loop weights both entropies by $m_f$. The derivation has
additionally been machine-checked term-for-term against the evaluator
source and validated by brute-force spectral comparison against direct
$2\cdot 2^n$-dimensional computation for $n \le 6$ (seventeen trials,
machine precision, including at $p = 16239/250000$).

\end{document}